\newcommand{\X}{\mathcal X_k}
\newcommand{\Y}{\mathcal Y_n}
\newcommand{\Z}{\mathcal Z_{k,n}}
\newcommand{\iid}{\stackrel{\textnormal{iid}}{\sim}}
\newcommand{\R}{\mathbb R}
\renewcommand{\P}{\mathbb P}
\newcommand{\E}{\mathbb E}
\newcommand{\Var}{\operatorname{Var}}
\newcommand{\Hnull}{\textnormal{H}_0}
\newcommand{\Hone}{\textnormal{H}_{1,\mathcal P}}
\newtheorem{theorem}{Theorem}
\newtheorem{corollary}{Corollary}
\theoremstyle{remark}
\newtheorem{remark}{Remark}
\theoremstyle{definition}
\newtheorem*{example*}{Example}
\title{Exact and efficient multivariate two-sample tests through adaptive linear multi-rank statistics}
\author{Dan Daniel Erdmann-Pham \\[2mm] \emph{Department of Statistics, Stanford University} \\[2mm] \texttt{erdpham@stanford.edu}}
\date{}
\begin{document}

\maketitle

\begin{abstract}
    So-called linear rank statistics provide a means for distribution-free (even in finite samples), yet highly flexible, two-sample testing in the setting of 
    univariate random variables. Their flexibility derives from a choice of weights that can be adapted to any given (simple) alternative hypothesis to achieve
    efficiency in case of correct specification of said alternative, while their non-parametric nature guarantees well-calibrated $p$-values even under
    misspecification. By drawing connections to (generalized) maximum likelihood estimation, and expanding on recent work on ranks in multiple dimensions, we extend
    linear rank statistics both to multivariate random variables and composite alternatives. Doing so yields non-parametric, multivariate two-sample tests that mirror
    efficiency properties of likelihood ratio tests, while remaining robust against model misspecification and computationally tractable. We prove non-parametric versions of the classical Wald and
    score tests facilitating hypothesis testing in the asymptotic regime, and relate these generalized linear rank statistics to linear spacing statistics enabling
    exact $p$-value computations in the small to moderate sample setting. Moreover, viewing rank statistics through the lens of likelihood ratios affords applications
    beyond fully efficient two-sample testing, of which we demonstrate three: testing in the presence of nuisance alternatives, simultaneous detection of location and
    scale shifts, and $K$-sample testing.
\end{abstract}

\section*{Introduction}

The task of deciding whether two samples $\X = \{X_1, ..., X_k\} \iid F$ and $\Y = \{Y_1, ..., Y_n \}\iid G$ ($\X, \Y\subset\R^p$) have arisen from the same ($F=G$) or distinct ($F\neq G$) distributions is known as two-sample testing, and is a staple in statistical analysis \autocite[see, e.g., ][and references therein]{lehmann1975nonparametrics, thas2010comparing, weiss1960two,jurevckova2012nonparametric}, with applications ranging from genomics \cite{charmpi2015weighted} to econometrics \cite{goerg2009nonparametric} to physics \cite{aslan2005statistical} among many. When parametric assumptions on $\X$ and $\Y$ are imposed; e.g., $F,G \in \mathcal P$ for some sufficiently regular, finite-dimensional family of absolutely continuous distributions $\mathcal P = \{p_{\theta} : \theta \in \Theta \subset \R^m\}$, then a test based on the generalized likelihood ratio 
\begin{equation}
    \lambda_{\mathcal P}(\X, \Y) = \frac{ \max_{\theta_1,\theta_2 \in \Theta} p_{\theta_1}(\X) p_{\theta_2}(\Y) }{ \max_{\theta\in\Theta} p_{\theta}(\X) p_{\theta}(\Y) }
    \label{eq:likelihood_ratio}
\end{equation}
displays various optimality properties \cite{van2000asymptotic, bahadur1967optimal, zeitouni1992generalized}, and is easily computed when $\min\{k,n\}$ is large as $-2\log \lambda_{\mathcal P}(\X, \Y) \to \chi^2_m$ (which we refer to as Wilks phenomenon, following analogous behavior in the one-sample situation derived in \autocite{wilks1938large}), rendering it the typical method of choice whenever a high-confidence candidate for $\mathcal P$ is available. However, when confidence in $\mathcal P$ is low; that is, the user may expect $F,G$ to be reasonably close to (in some suitable sense), yet not exactly contained within $\mathcal P$, then care must be taken with tests based on $\lambda_{\mathcal P}$: it is known that Wilks phenomenon does not persist under model misspecification, and a limiting misspecification-dependent non-central $\chi^2$ distribution obtains instead \autocite{white1982maximum}. Therefore, $p$-values based on $\lambda_{\mathcal P}$ are, in general, not well calibrated in this situation. A second difficulty commonly encountered with any maximum likelihood approach lies in its computational complexity: although computing resources are becoming ever more elaborate, so do data sets and their associated models, often rendering the two optimizations involved in \eqref{eq:likelihood_ratio} challenging. The goal of this article is to propose an alternative to likelihood-ratio type tests that retains the favourable optimality properties of $\lambda_{\mathcal P}$, while simultaneously being computationally feasible and robust to model misspecification.

A wealth of two-sample tests addressing different aspects of this goal have been developed: in the univariate case $\X, \Y \subset \mathbb R$, statistics based on the ranks of $\X$ and $\Y$ in the pooled sample $\Z = \X \cup \Y$ are distribution-free under $\Hnull$ and therefore automatically robust against misspecification of any kind. Popular examples of such tests include the Anderson-Darling and Cram\'er-von Mises tests, as well as the Mann-Whitney-$U$, van-der-Waerden, Siegel-Tukey (equivalent to Bradley-Freund-Ansari), Mood, and Klotz tests \autocite[see, e.g.,][for overviews and detailed references]{lehmann1975nonparametrics}. While the former two are based on notions of distance between the empirical distributions functions $F_k, G_n$ of $\X$ and $\Y$, the latter fall under the umbrella of so-called linear rank statistics, using test statistics of the form
\begin{equation*}
    T_{k,n}^w = \frac{1}{N}\sum_{j=1}^n w\left( \frac{r_j}{N} \right) = (1-\alpha_{k,n}) \int w \circ H_{k,n} \ \mathrm{d}G_{n},
\end{equation*}
where $H_{k,n} = \alpha_{k,n} F_k + (1-\alpha_{k,n}) G_n$ with $N = k+n$ and $\alpha_{k,n} = k/N$, is the empirical distribution of $\Z$, $r_j$ is the rank of $Y_j$ in $\Z$, and $w: [0,1] \to \R$ is a weight function chosen by the user. The shape of $w$ dictates the alternatives against which $T_{k,n}^w$ is powerful, with the Mann-Whitney-$U$ ($w(x) = x$) and van-der-Waerden ($w = \Phi^{-1}$, where $\Phi$ is the Gaussian CDF) tests designed to detect differences in location (i.e., $G(x) = F(x-\theta)$), while Siegel-Tukey ($w(x) = | x - 1/2 |$), Mood ($w(x) = (x-1/2)^2$), and Klotz ($w = \Phi^{-2}$) seek to differentiate shifts in scale $G(x) = F(x/\theta)$. Despite their non-parametric nature, linear rank statistics often perform surprisingly well compared to their parametric counterparts: e.g., \autocite{hodges1956efficiency} showed that the Pitman efficiency against local contiguous alternatives \autocite[for relevant definitions see, e.g., ][Chapter 14]{van2000asymptotic} of Mann-Whitney's $U$ with respect to the $t$-test is lower bounded by $108/125 \approx 0.864$ over all choices of $F$. More strikingly yet, the same efficiency of van-der-Waerden's test against the $t$-test is always at least $1$, as proven in \autocite{chernoff1958asymptotic}; that is, at least asymptotically in the Pitman sense, $T_{k,n}^{\Phi^{-1}}$ is always preferable to the $t$-test. Practical implementations of linear rank tests typically rely on the asymptotic normality of $T_{k,n}^w$ to determine rejection thresholds, though \autocite{erdmann2022generalized} has shown that exact $p$-values in the finite-sample setting can be obtained for so called linear rank spacings, which are closely related and asymptotically equivalent to using linear ranks \autocite{holst1980asymptotic}.

The primary difficulty in extending results such as these to the multivariate situation $\X, \Y \subset \R^p, p>1$ lies in the absence of a canonical ordering, leading to ambiguity of the notion of ranks in higher dimensions. Various extensions have been proposed \autocite{bickel1965some, chaudhuri1996geometric, marden1999multivariate, zuo2000general, hallin2002optimal, hallin2004rank, randles1989distribution, peters1990multivariate}, though two-sample tests associated with them typically do not exhibit the same distribution-freeness that their univariate counterparts do, and can pose computational challenges. Multivariate two-sample tests that are not based on generalized ranks generally leverage either geometric graphs \autocite{friedman1979multivariate, henze1988multivariate, chen2017new, rosenbaum2005exact} or embeddings into reproducing kernel Hilbert spaces \autocite{gretton2012kernel}, of which energy-distances are a special case \autocite{baringhaus2004new}. The latter usually lack distribution-freeness (even asymptotically), necessitating generic (often conservative) tail bounds or permutation schemes, while the former are known to have $0$ Pitman efficiency in most settings \autocite{bhattacharya2019general}. Recently, through the theory of optimal transport, \autocite{hallin2020distribution} introduced a multivariate generalization of the rank map that does inherit most desirable properties of its univariate relative, which \autocite{deb2021efficiency} used (and expanded) effectively in the design of a multivariate two-sample test that is exactly distribution-free and Pitman efficient against the Hotelling $T^2$ test \autocite[the suitable multivariate analogue of the $t$-test,][]{hotelling1992generalization} in the location setting. There, the authors start from the interpretation of Mann-Whitney's $U$ as a "rank version" of the $t$-test; that is, $U = k^{-1}\sum_{j=1}^k H_{k,n}(X_j) - n^{-1}\sum_{j=1}^n H_{k,n}(Y_j)/n$ compared with $t = k^{-1}\sum_{j=1}^k X_j - n^{-1}\sum_{j=1}^k Y_j$ (in the situation where $\Var X$ and $\Var Y$ are known and identical), and proceed to similarly define a rank version of Hotelling's $T^2$. The multivariate ranks involved in this construction, however, require computation of optimal transport maps between two samples of size $N = k+n$, which generically are expensive to obtain: exact solvers may take $O(pN^2 + N^3\log^2 N)$, while even approximate solutions may necessitate $O(pN^2 + N^2\log N)$ \autocite[][p. 42, 65]{COTFNT}. With $N$ frequently ranging into the thousands or millions in, e.g., modern biology, such computation quickly grows intractable. Our first contribution consists in identifying alternative transport maps that, too, allow for efficient multivariate two-sample testing, yet can be computed in $O(pN\log N)$. Moreover, in order to achieve efficiency, \autocite{deb2021efficiency} introduce a multivariate analogue $J: \R^p \to \R^p$ of the weight function $w$ defining $T_{k,n}^w$, and show that it performs well (relative to Hotelling's $T^2$) as long as a so-called effective reference distribution resulting from the multivariate rank map and $J$ is Gaussian (mirroring van-der-Waerden's choice of $w = \Phi^{-1}$). A number of effective reference distributions other than the Gaussian one are investigated, and shown to exhibit substantial differences in relative efficiency compared to Hotelling's $T^2$, with, e.g., some displaying significant dependence on $p$, the dimension of $\X$ and $\Y$, while others do not. In the face of such heterogeneity, it is natural to ask: do some choices of reference distributions perform better for a given choice of $F$, and is there an optimal such choice? Indeed, this question extends beyond the setting of detecting location shifts, and can be asked in the more general setting of the discussion surrounding \eqref{eq:likelihood_ratio}. Given that every test has meaningful power against local alternatives only in finitely many directions \autocite[see, e.g., ][Chapter 14.6]{lehmann2005testing}, providing practitioners with a clear recipe to turn a subspace of alternatives that they suspect may be relevant for the application at hand into a test powerful along said subspace should be valuable. By tying the notion of effective reference distributions and weight functions to $\lambda_{\mathcal P}$, we hope to give some guidance on how such a recipe might look like.

\section*{Univariate linear rank statistics as local likelihood ratios}

Pitman efficiencies measure relative performances of two tests against local alternatives; that is, against alternatives $G_n$ converging to $F$ at rate $N^{-1/2}$ in some suitable sense. To fix ideas, assume for now that $\mathcal P$ is univariate, twice continuously differentiable, $0\in \Theta = (a,b)$, $F = P_0$, and $G= P_{\theta/\sqrt{N}}$ for some $\theta\in\R$. This simple situation serves to illustrate main ideas, with extensions to more general settings and a more rigorous treatment following in later sections. In this context, the log-likelihood of $\Y$ under $G$ reads
\begin{equation*}
    \log p_{\theta/\sqrt{N}}(\Y) \stackrel{G}{\approx} \frac{\theta}{\sqrt{N}} z_0(\Y) - \frac{\theta^2}{2N} \partial_{\theta} z_0(\Y) + o_{G}(N^{-1}),
\end{equation*}
where $z_0$ is the score associated with $\mathcal P$ at $\theta = 0$, the notation $f(\Y)$ is used to mean $\sum_{Y\in\Y} f(Y)$ (occasionally it may represent $\prod_{Y\in\Y} f(Y)$, which will be clear from context; e.g., when $f$ represents a density), and the approximation is valid as long as $N$ is sufficiently large and $\alpha_{k,n} \to \alpha > 0$. Heuristically,
\begin{equation*}
    H_{k,n}(Y) \stackrel{G}{\approx} P_0(Y) + (1-\alpha_{k,n})\frac{\theta}{\sqrt{N}} \partial_{\theta} P_0(Y),
\end{equation*}
and so
\begin{equation*}
    \theta T_{k,n}^{z_0\circ P_0^{-1}} = \frac{\theta}{\sqrt{N}} z_0 \circ P_0^{-1}\circ H_{k,n}(\Y) \stackrel{G}{\approx} \frac{\theta}{\sqrt{N}}z_0(\Y) + O_G(N^{-1})
\end{equation*}
can be thought of as a first-order approximation to this log-likelihood, when $w$ is chosen to be the effective score $z_0 \circ P_0^{-1}$. For the location and shift families $\mathcal P_{\mu} = \{p_0(\cdot -\theta), \theta\in\R\}$ and $\mathcal P_{\sigma} = \{ \theta^{-1} p_0(\cdot/\theta), \theta \in \R_+ \}$, the weights $w(x) = x, w(x) = \Phi^{-1}(x)$ and $w(x) = \Phi^{-2}(x)$ are precisely of this form corresponding to logistic and Gaussian $p_0$, respectively. Moreover, given that the sample's randomness is largely captured in this first-order approximation ($\partial_{\theta} z_0(\Y)/N$ converges to a deterministic limit, a multiple of the Fisher information), it is not surprising that $T_{k,n}^w$ is fully efficient in these situations (that is; has Pitman efficiency $1$ compared to tests based on $\log \lambda_{\mathcal P}$). This analogy between $T_{k,n}^w$ and the log-likelihood of $\mathcal P$ naturally raises three questions:
\begin{enumerate}[label=(\alph*)]
    \item \textbf{Can this analogy be extended to the setting of multivariate $\Theta$?} When $\Theta \subset \mathbb R^m$ is multivariate, then so is $z_0$, and the meaning of $T_{k,n}^{z_0\circ P_0^{-1}}$ becomes less clear. $G_n$ can now approach $P_0$ from a variety of directions $\theta$, for which the likelihood locally has the form 
    \begin{equation}
        \left\langle \theta, z_0(\Y) \right\rangle / \sqrt{N} - \left\langle \theta, \partial_{\theta}z_0(\Y) \theta\right\rangle / N,
        \label{eq:theta_extension}
    \end{equation}
    and so one may expect that $z_0\circ P_0^{-1}\circ H_{k,n}(\Y)$ still yields a suitable non-parametric equivalent to $\lambda_{\mathcal P}$ in this situation. What are the appropriate analogues of the score test, Wald test, and likelihood-ratio tests, and to which extent is a non-parametric equivalent of second-order term in \eqref{eq:theta_extension} (which in the parametric setting is required to keep the log-likelihood bounded locally) needed?
    \item \textbf{Can this analogy be extended to the setting of multivariate $X$ and $Y$?} As mentioned previously, classical notions of multivariate ranks $R$ do not typically mirror most of the desirable properties of univariate ranks. The linear rank statistics discussed above rely on the probability integral transform, or CDF trick; that is, the uniformity of $P_0(X)$ whenever $X\sim P_0$ is univariate, combined with $P_0^{-1}(A)\sim X$ when $A$ is uniform on $[0,1]$. Recent work in \autocite{hallin2020distribution} characterized one appropriate extension of the CDF trick to $\R^m$ by associating the rank map $R$ with the optimal transport map transferring $P_0$ to a suitable reference measure. With such transport map in hand, a natural candidate for a multivariate extension of $T_{k,n}^w$ is $z_0\circ R^{-1}\circ R_{k,n}(\Y)$, where $R_{k,n}$ presents an appropriate empirical version of $R$. However, as mentioned previously, the cost of obtaining such $R_{k,n}$ can be prohibitive, and so identifying alternative transport maps of similar properties (relevant to two-sample testing) yet lighter computational burden is desirable. We identify such alternative transport maps, and thus obtain a non-parametric two-sample test that is both statistically and computationally efficient.
    \item \textbf{Does the non-parametric nature of $T_{k,n}^w$ afford any additional flexibility beyond $\lambda_{\mathcal P}$?} When $\mathcal P$ is misspecified, then tests based on $\lambda_{\mathcal P}$ are in general not well-calibrated \autocite{white1982maximum}. The null distribution of $T_{k,n}^w$, however, solely depends on $w$ independent of the precise shape of $F$ ($=G$ under $\Hnull$), and thus possibly allows querying more complex hypotheses. Here we explore two circumstances in which this added flexibility may be of use:
    \begin{enumerate}[label=(\arabic*)]
        \item \textit{Testing in the presence of nuisance alternatives.} Often a practitioner might like to devise a test that has power against a parametric family $\mathcal P$, while simultaneously ignoring any changes along a second model $\mathcal Q$. That is, $\mathcal Q$ acts in some sense as a nuisance model, and may, e.g., represent known measurement inaccuracies. For instance, in the context of analyzing microscopic images, differences in the actual depicted samples are meaningful to detect, whereas rotations and translations resulting from sample placement are not. We will see that $T_{k,n}^w$ admits a natural choice of $w$ that provides large power against $\mathcal P$, while remaining insensitive to $\mathcal Q$.
        \item \textit{Combining multiple weights.} Characterizing individual choices of weights that perform well against particular, narrowly defined alternatives is usually feasible; e.g., Mann-Whitney's and Klotz' tests are two such instances in the setting of location and scale shifts. A common approach to querying composite hypotheses is to apply multiple such tests in succession, and then correct the resulting multiple testing burden by any of a number of multiple testing corrections \autocite[see, e.g., ][]{bonferroni1936teoria, benjamini1995controlling, benjamini2010simultaneous, barber2015controlling, noble2009does}. Depending on the correlation structure of the tests, such correction procedures may lack power, or control false discovery rates rather than false rejection rates. We will show that $T_{k,n}^w$ provides a convenient framework for combining multiple individual weights $w_j$ in a manner that does not necessitate multiple testing corrections, while delivering competitive power in each individual direction $w_j$.
    \end{enumerate}
\end{enumerate}
The following chapters address each of these questions in turn, with additional sections discussing an extension to $K$-sample testing and exact finite sample results. Software implementing the most general $K$-sample test (and illustrating various of its optimality properties) can be found at \url{https://web.stanford.edu/~erdpham/research.html}.

\section*{Extension to multivariate parameter spaces}

As indicated above, one might hope that $z_0\circ P_0^{-1}\circ H_{k,n}$ behaves like the score function under $P_0$ and $G_n = P_{\theta/\sqrt{N}}$, even when $\mathcal P$ is multivariate. This is indeed true.
\begin{theorem}
For $k,n\to\infty, k/N\to\alpha>0$, and $w\in \R^m$ continuous, square-integrable on $(0,1)$ and $\int w = 0$, the statistic $T_{k,n}^w = w\circ H_{k,n}(\Y)/\sqrt{N}$ is asymptotically normal under local contiguous alternatives with
\begin{align*}
    &T_{k,n}^w \stackrel{\Hnull}{\longrightarrow} \mathcal N\left(0, \alpha(1-\alpha) H_w\right)
    &T_{k,n}^w \stackrel{\Hone}{\longrightarrow} \mathcal N\left(\alpha(1-\alpha) H_{w,z_0\circ P_0^{-1}}\theta, \alpha(1-\alpha) H_w\right),
\end{align*}
where $H_{w_1,w_2} = \int w_1\otimes w_2$, $H_w = H_{w,w}$, and $\Hone: X\sim p_0, Y\sim p_{\theta/\sqrt{N}}$, given that $\mathcal P$ is quadratic mean differentiable, $\int \| z_0 \circ P_0^{-1} \|_2^2$ finite, and $H_{z_0\circ P_0^{-1}}$ exists and is full-rank.
\label{thm:nonparametric_score}
\end{theorem}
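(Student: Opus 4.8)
The plan is to establish the null and local-alternative limits separately: I would obtain the null law from a combinatorial central limit theorem for simple linear rank statistics and then transfer to the alternative by Le Cam's third lemma. Since $T_{k,n}^w$ is $\R^m$-valued, I would first reduce to the scalar case by the Cram\'er--Wold device: for fixed $a \in \R^m$ the projection $\langle a, T_{k,n}^w\rangle = T_{k,n}^{\langle a, w\rangle}$ is a univariate linear rank statistic with continuous, square-integrable weight $\langle a, w\rangle$ satisfying $\int \langle a, w\rangle = 0$, and joint asymptotic normality of the vector follows by letting $a$ range over $\R^m$.

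For the null limit, note that under $\Hnull$ (where $F = G = P_0$ is continuous) the values $H_{k,n}(Y_j) = r_j/N$ are the normalized pooled ranks, so $T_{k,n}^{\langle a, w\rangle}$ is a sampling-without-replacement sum of the scores $\{\langle a, w\rangle(i/N)\}_{i=1}^{N}$. Its exact mean vanishes in the limit because $\int \langle a, w\rangle = 0$ (a point I would make rigorous via the order-statistic scores $\E[\langle a, w\rangle(U_{(i:N)})]$ or by recentering, so as to control the Riemann-sum error), and its finite-population variance converges to $\alpha(1-\alpha)\int (\langle a, w\rangle)^2 = \alpha(1-\alpha)\, a^\top H_w a$. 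Asymptotic normality then follows from the Wald--Wolfowitz--Noether--H\'ajek combinatorial central limit theorem, whose Noether/Lindeberg condition is guaranteed by the continuity and square-integrability of $w$; this yields $T_{k,n}^w \stackrel{\Hnull}{\longrightarrow} \mathcal N(0, \alpha(1-\alpha) H_w)$.

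To pass to $\Hone$, I would use quadratic mean differentiability of $\mathcal P$ to write the log-likelihood ratio of the $Y$-sample in its LAN form, $\log \Lambda_n = N^{-1/2}\sum_{j=1}^{n}\langle \theta, z_0(Y_j)\rangle - \tfrac12(1-\alpha)\langle \theta, H_{z_0\circ P_0^{-1}}\theta\rangle + o_P(1)$, where the identity $\E[z_0 z_0^\top] = H_{z_0\circ P_0^{-1}}$ follows from the probability integral transform; full-rankness of $H_{z_0\circ P_0^{-1}}$ makes the Gaussian limit non-degenerate, so Le Cam's first lemma gives contiguity of $\Hone$ to $\Hnull$. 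Le Cam's third lemma then reduces the alternative limit to the joint null behaviour of $(T_{k,n}^w, \log \Lambda_n)$, which is jointly asymptotically normal since both components are asymptotically linear in iid summands: the drift under $\Hone$ equals $\lim \operatorname{Cov}_{\Hnull}(T_{k,n}^w, \log \Lambda_n)$ while the asymptotic variance is unchanged. To evaluate this covariance I would linearize the rank statistic by its H\'ajek projection, which under $\Hnull$ takes the form $T_{k,n}^w = \alpha\, N^{-1/2}\sum_{j=1}^{n} w(P_0(Y_j)) + (\text{an independent } X\text{-dependent term}) + o_P(1)$. Since $\log \Lambda_n$ depends only on the $Y$-sample, only the first term contributes, and the change of variables $u = P_0(Y)$ gives $\lim \operatorname{Cov}_{\Hnull}(T_{k,n}^w, \log \Lambda_n) = \alpha(1-\alpha)\int_0^1 w(u)\,(z_0\circ P_0^{-1}(u))^\top\, du\, \theta = \alpha(1-\alpha) H_{w, z_0\circ P_0^{-1}}\theta$, the asserted mean.

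I expect the main obstacle to be the H\'ajek projection step: showing that $T_{k,n}^w$ is asymptotically equivalent to the sum-of-iid statistic above with an $o_P(1)$ remainder, for merely continuous and square-integrable $w$. The standard route replaces $H_{k,n}$ inside $w$ by $P_0$ and controls the discrepancy against the empirical process $\sqrt N (H_{k,n} - H)$; the delicate part is bounding the second-order remainder without assuming differentiability of $w$, which is precisely the content of H\'ajek's variance inequality for the difference between a linear rank statistic and its projection. I would establish this by approximating $w$ in $L^2(0,1)$ by smooth (or bounded-variation) functions and passing to the limit, after which the remaining ingredients --- the Cram\'er--Wold bookkeeping, the LAN expansion, and the two applications of Le Cam's lemmas --- are routine.
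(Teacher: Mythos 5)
Your proposal is correct and reaches the stated limits, and its skeleton --- Cram\'er--Wold reduction, a combinatorial CLT for sampling without replacement under $\Hnull$, and Le Cam's third lemma for the shift --- matches the paper's. The null part is essentially identical (the paper invokes Ros\'en's finite-population CLT where you cite Wald--Wolfowitz--Noether--H\'ajek; your remark about recentering to control the Riemann-sum error in the mean is, if anything, more careful than the paper). Where you genuinely diverge is in how the joint null normality of $(T_{k,n}^w,\rho_n)$ and the covariance $\operatorname{Cov}_{\Hnull}(T_{k,n}^w,\rho_n)$ are obtained. You linearize $T_{k,n}^w$ by its H\'ajek projection, replacing $H_{k,n}$ by $P_0$ inside $w$ and isolating the $Y$-part with coefficient $\alpha$, then compute an iid covariance; your coefficient and the resulting drift $\alpha(1-\alpha)H_{w,z_0\circ P_0^{-1}}\theta$ are correct. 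The paper instead stays entirely in the finite-population world: since $z_0(Y_j)=z_0\circ H_{k,n}^{-1}(r_j)$, it augments the score set to $\mathcal A_N^+=\{w(j/N)+e_{m+1}\otimes\langle z_0(Z_j),\theta\rangle\}$ so that, conditionally on $\Z$, the pair $(T_{k,n}^w,\rho_n)$ is itself a size-$n$ sample without replacement from a (random but a.s. convergent) population, and the covariance is computed exactly from sampling-without-replacement moment formulas. This conditioning device buys exactly what you identify as your main obstacle: it sidesteps the H\'ajek projection remainder entirely, so no smoothness or bounded-variation approximation of $w$ is needed beyond the continuity and square-integrability already assumed for the Lindeberg condition. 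Your route is the classical Chernoff--Savage/H\'ajek path and would work, but completing it requires carrying out the $L^2$ approximation argument you sketch; if you adopt the paper's augmented-score-set trick, that technical step disappears.
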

\begin{proof}
    Set $\mathcal A_N = \{ w(jN^{-1}), j = 1, ..., N \}$, then under $\Hnull$, $\{ w \circ H_{k,n}(Y_1), ..., w \circ H_{k,n}(Y_n) \}$ can be obtained by sampling without replacement $n$ times from $\mathcal A_N$. Thus, the $\Hnull$ convergence of $T_{k,n}^w$ is a consequence of the multivariate central limit theorem for simple random sampling \autocite[see, e.g., ][where the univariate version is stated; the multivariate generalization is a direct consequence of applying the Cram{\'e}r-Wold device]{rosen1964limit}, after verifying that
    \begin{equation*}
        \max_{a \in \mathcal A_N} \frac{\| H^{-1/2}_w a \|^2}{N} \to 0
    \end{equation*}
    by the continuity and square-integrability assumptions on $w$. The specific covariance structure is calculated to be
    \begin{equation*}
        \Var T_{k,n}^w = \frac{1}{N} \sum_{j=1}^n \E w(r_j) \otimes w(r_j) + \frac{1}{N} \sum_{m\neq j} \E w(r_m) \otimes w(r_j) \xrightarrow{\Hnull} (1-\alpha) H_w - (1-\alpha)^2 H_w = \alpha(1-\alpha) H_w,
    \end{equation*}
    as desired.
    \\[2mm]
    The $\Hone$ results is shown by way of LeCam's third lemma. Writing the log-likelihood-ratio statistic
    \begin{equation*}
        \rho_n = \sum_{j=1}^n \log \frac{ p_{\theta/\sqrt{N}}(Y_j) }{ p_0(Y_j) }
    \end{equation*}
    in its local expansion
    \begin{equation*}
        \rho_n = \left\langle \frac{ z_0(\Y) }{ \sqrt{N} }, \theta \right\rangle - \frac{1}{2} \left\langle \theta, H_{z_0\circ P_0^{-1}}\theta \right\rangle + o_{\Hnull}(1) = \left\langle \frac{ z_0 \circ H_{k,n}^{-1} \circ H_{k,n}(\Y) }{ \sqrt{N} }, \theta \right\rangle - \frac{1}{2} \left\langle \theta, H_{z_0\circ P_0^{-1}}\theta \right\rangle + o_{\Hnull}(1),
    \end{equation*}
    and augmenting $\mathcal A_N$ to
    \begin{equation*}
        \mathcal A_N^+ = \left\{ w\left( \frac{j}{N} \right) + e_{m+1} \otimes \left\langle z_0 \circ H_{k,n}^{-1} \left(\frac{j}{N}\right), \theta \right\rangle, j = 1, ..., N \right\} = \left\{ w\left( \frac{j}{N} \right) + e_{m+1} \otimes \left\langle z_0(Z_j), \theta \right\rangle \right\}
    \end{equation*}
    yields the joint vector $(T_{k,n}^w, \rho_n)$ as a size-$n$ sample without replacement from $\mathcal A_N^+$ up to $o_{\Hnull}(1)$ and a deterministic shift. Though $\mathcal A_N^+$ depends on $\Z$ (and therefore is random),
    \begin{equation*}
        \P_{\Hnull}\left[ \bigg\{ N^{-1} \sum_{a \in \mathcal A_N^+} a \to 0 \bigg\} \cap \bigg\{ N^{-1} \sum_{a\in \mathcal A_N^+ a} a^2 \to \Sigma \bigg\} \right] = 1,
    \end{equation*}
    for some to be determined covariance matrix $\Sigma \in \R^{(m+1) \times (m+1)}$, and so the central limit theorem for simple random sampling applies to show the joint normality of $(T_{k,n}^w, \rho_n)$ under $\Hnull$. By LeCam's third lemma, the shift under $\Hone$ is then given by
    \begin{equation*}
        \E T_{k,n}^w \rho_n = \E\left(\frac{1}{N} \sum_{j=1}^n w(r_j) \otimes z_0\circ H_{k,n}^{-1}(r_j) + \frac{1}{N} \sum_{m\neq j} w(r_m) \otimes z_0 \circ H_{k,n}^{-1}(r_j)\right) \theta = (A_1 + A_2)\theta.
    \end{equation*}
    By the law of large numbers for sampling without replacement \autocite[][section 5]{rosen1964limit}, $A_1$ converges to $(1-\alpha)H_{w, z_0\circ P_0^{-1}}$, while 
    \begin{equation*}
        A_2 = \frac{1-\alpha}{n} \E \sum_{m\neq j} \E\left[ w(r_m) \otimes z_0 \circ H_{k,n}^{-1}(r_j) \mid \Z \right] = -\frac{(n-1)(1-\alpha)}{N(N-1)} \E \sum_{s=1}^N w(sN^{-1}) \otimes z_0 \circ H_{k,n}^{-1}(sN^{-1}),
    \end{equation*}
    which converges to $(1-\alpha)^2 H_{w, z_0\circ P_0^{-1}}$. Consequently
    \begin{equation*}
        \E T_{k,n}^w \rho_n \xrightarrow{\Hnull} (1-\alpha)H_{w, z_0 \circ P_0^{-1}} - (1-\alpha)^2 H_{w, z_0\circ P_0^{-1}} = \alpha(1-\alpha) H_{w, z_0 \circ P_0^{-1}},
    \end{equation*}
    and the theorem is proved.
\end{proof}
Theorem \ref{thm:nonparametric_score} allows for hypothesis testing as long as $k$ and $n$ are large enough; for small to moderate sample sizes, see the section on exact finite sample distributions. Tests based on $T_{k,n}^w$ and $T_{k,n}^{\mu + \sigma w}$, with $\mu\in\R^m, \sigma\in\R$ are equivalent since $(T_{k,n}^{\mu + \sigma w} - n\mu)/\sigma = T_{k,n}^w$, and so the centering $\int w = 0$ is purely to ease notation and does not impact the generality of Theorem \ref{thm:nonparametric_score}. This centering will be assumed implicitly in all following results, as will be the regularity assumptions on $\mathcal P$ (unless stronger conditions are needed), which guarantee local asymptotic normality of $\rho_n$. Big-$N$ results like these are sufficient to compare tests in the Pitman sense.
\begin{corollary}
The Pitman efficiency of $T_{k,n}^{z_0\circ P_0^{-1}}$ relative to the Neyman-Pearson test is $\alpha$ for local alternatives $p_{\theta/\sqrt{N}}\in\mathcal P$ converging to $p_0$.
\end{corollary}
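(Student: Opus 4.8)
The plan is to read the two limiting Gaussian laws off Theorem~\ref{thm:nonparametric_score} for the efficient-score weight $w = z_0\circ P_0^{-1}$, extract the non-centrality parameter governing the local power of the associated test, and divide it by the corresponding non-centrality of the Neyman--Pearson (likelihood-ratio) test. Writing $H := H_{z_0\circ P_0^{-1}}$, for this weight one has $H_w = H_{w,\,z_0\circ P_0^{-1}} = H$, so Theorem~\ref{thm:nonparametric_score} yields $T_{k,n}^w \to \mathcal N(0,\alpha(1-\alpha)H)$ under $\Hnull$ and $T_{k,n}^w \to \mathcal N\!\big(\alpha(1-\alpha)H\theta,\ \alpha(1-\alpha)H\big)$ under $\Hone$. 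The identity linking this to classical efficiency is $H = I$, where $I = \E_{P_0}[z_0\otimes z_0]$ is the Fisher information of $\mathcal P$ at $\theta = 0$; since $X$ is univariate here, the substitution $u = P_0(x)$ turns $\int z_0\otimes z_0\,\mathrm dP_0$ into $\int_0^1 (z_0\circ P_0^{-1})\otimes(z_0\circ P_0^{-1}) = H$.

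First I would quantify the local power of $T_{k,n}^w$. Its two limit laws share the covariance $\Sigma = \alpha(1-\alpha)H$ and differ only through the mean shift $\mu = \alpha(1-\alpha)H\theta$, so the asymptotic power of the induced (chi-square / Mahalanobis) test is a monotone function of the non-centrality $\langle\mu,\Sigma^{-1}\mu\rangle = \alpha(1-\alpha)\langle\theta,H\theta\rangle$; fixing a direction $\theta$, the optimal scalar contrast $\langle c, T_{k,n}^w\rangle$ is maximized (by Cauchy--Schwarz in the $H$-inner product) at $c = \theta$ and attains this same value. For the benchmark I would use that the Neyman--Pearson test rejects for large $\rho_n$, and that under local asymptotic normality $\rho_n$ is asymptotically $\mathcal N(-\tfrac12\tau^2,\tau^2)$ under $\Hnull$ and $\mathcal N(\tfrac12\tau^2,\tau^2)$ under $\Hone$, whence its non-centrality equals $\tau^2$, the limiting variance of the linear term $\langle z_0(\Y)/\sqrt N,\theta\rangle$. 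Because only the $n\approx(1-\alpha)N$ points of $\Y$ carry the signal, $\tau^2 = \lim\tfrac{n}{N}\langle\theta,I\theta\rangle = (1-\alpha)\langle\theta,H\theta\rangle$.

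Combining the two, the Pitman efficiency is the limiting ratio of non-centralities (equivalently the inverse ratio of sample sizes needed for matched power), namely $\alpha(1-\alpha)\langle\theta,H\theta\rangle \big/ (1-\alpha)\langle\theta,H\theta\rangle = \alpha$; the cancellation of $\langle\theta,H\theta\rangle$ makes this independent of the direction $\theta$, so the efficiency is well defined in the multivariate-$\Theta$ setting and equals $\alpha$ uniformly. I expect the only real subtlety to lie in pinning down the benchmark's non-centrality rather than in the algebra: one must keep the factor $1-\alpha$ that arises because the signal resides solely in $\Y$, invoke $H = I$ to convert the score-based quantity into the effective-score matrix, and note that the oracle Neyman--Pearson test spends no information learning the common baseline whereas $T_{k,n}^w$, like any genuine two-sample statistic, has effective sample size $kn/N = N\alpha(1-\alpha)$ --- precisely the origin of the extra factor $\alpha$.
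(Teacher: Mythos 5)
Your proposal is correct and follows essentially the same route as the paper: read the two Gaussian limits of $T_{k,n}^{z_0\circ P_0^{-1}}$ off Theorem~\ref{thm:nonparametric_score}, obtain the limit of the Neyman--Pearson statistic from local asymptotic normality (with the crucial factor $1-\alpha$ because only $\Y$ carries the signal), and take the ratio of efficacies --- your non-centrality ratio $\alpha(1-\alpha)\langle\theta,H\theta\rangle/\big((1-\alpha)\langle\theta,H\theta\rangle\big)=\alpha$ is exactly the square of the paper's $(\mu_1'/\mu_2')(\sigma_2/\sigma_1)$ quantity, i.e.\ the standard Pitman ARE for $N^{-1/2}$-alternatives. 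If anything, your write-up is the more careful one, since it makes explicit both the identification $H_{z_0\circ P_0^{-1}}=\E_{P_0}[z_0\otimes z_0]$ and the reduction of the multivariate statistic to its optimal scalar contrast, which the paper leaves implicit.
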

\begin{proof}
    By local asymptotic normality, the log-likelihood ratio $\rho_n$ converges to
    \begin{align*}
        &\rho_n \xrightarrow{\Hnull} \mathcal N\left( -\frac{\sqrt{1-\alpha}}{2} H_{z_0\circ P_0^{-1}}\theta, H_{z_0\circ P_0^{-1}} \right),
        &\rho_n \xrightarrow{\Hone} \mathcal N\left( \frac{\sqrt{1-\alpha}}{2} H_{z_0\circ P_0^{-1}}\theta, H_{z_0\circ P_0^{-1}} \right).
    \end{align*}
    The corollary then follows from the $(\mu_1'/\mu_2')(\sigma_2/\sigma_1)$ characterization of Pitman efficiency \autocite[Corollary 15.2.1 of ][]{lehmann2005testing}.
\end{proof}
The rate of $\alpha$ is, of course, a result of not knowing $F$ exactly. The fairer comparison is against the generalized likelihood ratio test.
\begin{corollary}
The Pitman efficiency of $T_{k,n}^{z_0\circ P_0^{-1}}$ against $-2\log \lambda_{\mathcal P}$ is $1$ for alternatives $p_{\theta/\sqrt{N}}\in\mathcal P$ converging to $p_0$, assuming that the maximum-likelihood estimates involved in $\lambda_{\mathcal P}(\X, \Y)$ are efficient in the sense of \autocite[][eq. (14.62)]{lehmann2005testing}.
\label{cor:pitman_likelihood_ratio}
\end{corollary}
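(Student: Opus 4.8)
The plan is to realize both procedures as quadratic forms converging to (noncentral) $\chi^2_m$ laws under the local alternative, and to read off the Pitman efficiency as the ratio of their noncentrality parameters. First I would specialize Theorem~\ref{thm:nonparametric_score} to the effective score weight $w = z_0\circ P_0^{-1}$. Setting $\mathcal I := H_{z_0\circ P_0^{-1}}$ and using the change of variables $u = P_0(x)$ to identify $\mathcal I = \int (z_0\circ P_0^{-1})\otimes(z_0\circ P_0^{-1}) = \int z_0\otimes z_0\,\mathrm dP_0$ as the Fisher information of $\mathcal P$ at $\theta=0$, the theorem yields $T_{k,n}^{z_0\circ P_0^{-1}}\xrightarrow{\Hnull}\mathcal N(0,\alpha(1-\alpha)\mathcal I)$ and $T_{k,n}^{z_0\circ P_0^{-1}}\xrightarrow{\Hone}\mathcal N(\alpha(1-\alpha)\mathcal I\theta,\alpha(1-\alpha)\mathcal I)$. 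The corresponding Wald-type test rejects for large values of $Q_{k,n} = (T_{k,n}^{z_0\circ P_0^{-1}})^\top(\alpha(1-\alpha)\mathcal I)^{-1}T_{k,n}^{z_0\circ P_0^{-1}}$; standardizing the Gaussian limit shows $Q_{k,n}\xrightarrow{\Hnull}\chi^2_m$ and $Q_{k,n}\xrightarrow{\Hone}\chi^2_m(\delta_T^2)$ with noncentrality $\delta_T^2 = \alpha(1-\alpha)\,\theta^\top\mathcal I\theta$.

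Next I would determine the limiting law of $-2\log\lambda_{\mathcal P}(\X,\Y)$ under the same local alternative. Invoking local asymptotic normality of $\mathcal P$ and the assumed efficiency of the maximum-likelihood estimates, I would linearize the sample-wise maximizers $\hat\theta_1$ (from $\X$) and $\hat\theta_2$ (from $\Y$) and the pooled maximizer $\hat\theta$ through their efficient influence functions, reducing the generalized likelihood ratio to the two-sample quadratic form \[ -2\log\lambda_{\mathcal P} = (\hat\theta_1-\hat\theta_2)^\top\big[(k\mathcal I)^{-1}+(n\mathcal I)^{-1}\big]^{-1}(\hat\theta_1-\hat\theta_2) + o_{\P}(1). \] Here $\big[(k\mathcal I)^{-1}+(n\mathcal I)^{-1}\big]^{-1} = \tfrac{kn}{N}\mathcal I$, and under $\Hone$ one has $\sqrt N(\hat\theta_1-\hat\theta_2)\to\mathcal N(-\theta,(\alpha(1-\alpha))^{-1}\mathcal I^{-1})$, so the same standardization gives $-2\log\lambda_{\mathcal P}\xrightarrow{\Hone}\chi^2_m(\delta_{LR}^2)$ with $\delta_{LR}^2 = \alpha(1-\alpha)\,\theta^\top\mathcal I\theta$ (and $\chi^2_m$ under $\Hnull$, recovering the Wilks phenomenon).

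Comparing the two limits gives $\delta_T^2 = \delta_{LR}^2$ for every local direction $\theta$: both tests are asymptotically $\chi^2_m$ with identical noncentrality. Since the Pitman efficiency of two test sequences with a common limiting null law and equal degrees of freedom is the ratio of their noncentrality parameters, the efficiency equals $\delta_T^2/\delta_{LR}^2 = 1$; this mirrors the previous corollary, the generalized likelihood ratio test forfeiting relative to Neyman--Pearson exactly the factor that $T_{k,n}^{z_0\circ P_0^{-1}}$ does. The hard part will be Step~2: making the quadratic expansion of the two-sample generalized likelihood ratio rigorous---handling the constrained pooled maximizer $\hat\theta$, using contiguity so the remainder is $o_{\P}(1)$ under $\Hone$ as well as $\Hnull$, and turning the efficiency hypothesis into the efficient-influence-function linearization of $\hat\theta_1,\hat\theta_2$. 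The remaining manipulations are routine computations with Gaussian quadratic forms.
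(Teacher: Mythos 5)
Your proposal is correct and follows essentially the same route as the paper: both reduce the comparison to the noncentrality parameters of the limiting noncentral $\chi^2_m$ laws and find the common value $\alpha(1-\alpha)\,\theta^{\top}H_{z_0\circ P_0^{-1}}\theta$ on each side. The only cosmetic difference is that the paper obtains the likelihood-ratio noncentrality by projecting onto $(\sqrt{\alpha},\sqrt{1-\alpha})^{\perp}$ rather than via your two-sample Wald linearization of $\hat\theta_1-\hat\theta_2$; the resulting $\|\mu\|_2^2$ is identical.
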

\begin{proof}
    The regularity assumptions on $\mathcal P$ guarantee that $-2 \log \lambda_{\mathcal P}(\X, \Y)$ converges to
    \begin{align*}
        &-2\log \lambda_{\mathcal P}(\X, \Y) \xrightarrow{\Hnull} \chi^2_m,
        &-2\log \lambda_{\mathcal P}(\X, \Y) \xrightarrow{\Hone} 
        \left\| \mu + Z_m \right\|_2^2,
    \end{align*}
    where 
    \begin{equation*}
        \mu = \left[\Pi_{(\sqrt{\alpha},\sqrt{1-\alpha})}^{\perp} \otimes 1_m\right]
        \left[\begin{pmatrix}
            \sqrt{\alpha} & 0 \\
            0 & \sqrt{1-\alpha}
        \end{pmatrix}
        \otimes H_{z_0\circ P_0^{-1}}^{1/2}\right] \left[ e_2 \otimes \theta \right],
    \end{equation*}
    with $1_m$ the $m\times m$ identity matrix and $\Pi_{x}^{\perp}$ the projection onto $(\operatorname{span} x)^{\perp}\subset \R^2$ for any $x\in\R^2$. Consequently,
    \begin{equation*}
        \mu =
        \begin{pmatrix}
            -\sqrt{\alpha}(1-\alpha) \\
            \alpha\sqrt{1-\alpha}
        \end{pmatrix}
        \otimes H_{z_0\circ P_0^{-1}}^{1/2} \theta,
    \end{equation*}
    and the result follows as
    \begin{equation*}
        \| \mu \|_2^2 = \alpha(1-\alpha) \| H_{z_0\circ P_0^{-1}}^{1/2} \theta \|_2^2. \qedhere
    \end{equation*}
\end{proof}
The efficiency assumption on the maximum likelihood estimates allow, in essence, to employ Taylor expansion as if $\mathcal P$ was twice continuously differentiable, and prove corresponding normality results \autocite[][Theorem 14.4.1]{lehmann2005testing}; and will be implicit in all future statements. In order to compute the weight function in Theorem \ref{thm:nonparametric_score} and Corollary \ref{cor:pitman_likelihood_ratio}, a candidate $\theta^* \in \Theta$ which is believed to govern $\X$ needs to be known. In practice, this is not always the case, and misspecified weights may not be fully efficient. This suggests a strategy of adaptively determining $w$.
\begin{theorem}
Set $\hat\theta = \arg\max_{\theta} p_{\theta}(\Z)$, $\hat w = z_{\hat\theta}\circ P_{\hat\theta}^{-1}$, and assume $\hat\theta \to 0$ almost surely, then
\begin{enumerate}[label=(\alph*)]
    \item $(T_{k,n}^{\hat w} \mid \hat \theta)$ is exactly distribution-free under $\Hnull$.
    \item Asymptotically,
        \begin{align*}
            &T_{k,n}^{\hat w} \stackrel{\Hnull}{\longrightarrow} \mathcal N\left(0, \alpha(1-\alpha) H_{z_0\circ P_0^{-1}}\right)
            &T_{k,n}^{\hat w} \stackrel{\Hone}{\longrightarrow} \mathcal N\left(\alpha(1-\alpha) H_{z_0\circ P_0^{-1}}\theta, \alpha(1-\alpha) H_{z_0\circ P_{0}^{-1}}\right),
        \end{align*}
        with notation as in Theorem \ref{thm:nonparametric_score}.
    \item The Pitman efficiency of $T_{k,n}^{\hat w}$ compared to $-2\log \lambda_{\mathcal P}$ against local alternatives $p_{\theta^* + \theta/\sqrt{N}}\in\mathcal P$ is $1$ for any $\theta^*\in\Theta$.
\end{enumerate}
\label{thm:adaptive}
\end{theorem}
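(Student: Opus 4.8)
The plan is to handle the three parts in sequence, reducing each to machinery already in place. For \textbf{(a)} the decisive observation is that $\hat\theta = \arg\max_\theta p_\theta(\Z)$ is a symmetric function of the pooled sample and hence measurable with respect to its order statistics. Under $\Hnull$ the vector of ranks of $\Y$ in $\Z$ is independent of those order statistics and uniform over the $\binom{N}{n}$ admissible rank sets; so conditioning on $\hat\theta$ freezes the weight $\hat w = z_{\hat\theta}\circ P_{\hat\theta}^{-1}$ while leaving the ranks uniform. Since $T_{k,n}^{\hat w}$ depends on the data only through the ranks and through $\hat\theta$, I would conclude that its conditional law given $\hat\theta$ is that of $N^{-1/2}\sum_{r\in S}\hat w(r/N)$ for a uniformly drawn $n$-subset $S$, a law that involves $F$ only through the conditioned value $\hat\theta$. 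That is exactly the claimed conditional distribution-freeness.

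For \textbf{(b)} I would compare $T_{k,n}^{\hat w}$ to the oracle statistic $T_{k,n}^{w_0}$ with $w_0 = z_0\circ P_0^{-1}$. Since $\hat\theta\to 0$ and $\theta\mapsto z_\theta\circ P_\theta^{-1}$ is smooth under the standing regularity hypotheses, a Taylor expansion $\hat w(u) = w_0(u) + \hat\theta\,D(u) + \tfrac12\hat\theta^2 D_2(u,\tilde\theta)$ with $D(u) = \partial_\theta[z_\theta(P_\theta^{-1}(u))]|_{\theta=0}$ gives
\begin{equation*}
    T_{k,n}^{\hat w} - T_{k,n}^{w_0} = \hat\theta\cdot\frac{1}{\sqrt N}\sum_{j=1}^n D\!\left(\frac{r_j}{N}\right) + \frac12\,\hat\theta^2\cdot\frac{1}{\sqrt N}\sum_{j=1}^n D_2\!\left(\frac{r_j}{N},\tilde\theta\right).
\end{equation*}
The crucial cancellation comes from the identity $\int_0^1 z_\theta(P_\theta^{-1}(u))\,\mathrm{d}u = \int \partial_\theta p_\theta = 0$, valid for every $\theta$; differentiating in $\theta$ yields $\int_0^1 D = 0$, so the linear rank statistic $N^{-1/2}\sum_j D(r_j/N)$ is centered and therefore $O_P(1)$ rather than $O_P(\sqrt N)$. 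Combined with the MLE rate $\hat\theta = O_P(N^{-1/2})$ the first term is $o_P(1)$, and a domination bound on $\sup_{|\theta|\le\epsilon}\|D_2(\cdot,\theta)\|$ controls the remainder at $O_P(N^{-1/2})$. Under $\Hnull$ this is cleanest, since $\hat\theta$ (a function of the order statistics) is independent of the rank-based factor, making the product a product of independent $O_P(N^{-1/2})$ and $O_P(1)$ terms.

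It then remains, for \textbf{(b)}, to invoke Theorem \ref{thm:nonparametric_score} with $w = w_0$, observing $H_{w_0} = H_{w_0,\,z_0\circ P_0^{-1}} = H_{z_0\circ P_0^{-1}}$, which reproduces the stated limits under both hypotheses; contiguity (guaranteed by quadratic mean differentiability) transfers the $\Hnull$ negligibility to $\Hone$. Part \textbf{(c)} follows from \textbf{(b)} exactly as Corollary \ref{cor:pitman_likelihood_ratio} followed from Theorem \ref{thm:nonparametric_score}: $T_{k,n}^{\hat w}$ shares the asymptotic law of the fully efficient oracle $T_{k,n}^{z_0\circ P_0^{-1}}$, so its Pitman efficiency against $-2\log\lambda_{\mathcal P}$ is $1$. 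For an arbitrary centering $\theta^*$ I would reparametrize $\mathcal P$ about $\theta^*$; the differentiability and efficiency hypotheses are preserved, the MLE is consistent for $\theta^*$ so the recentered estimate tends to $0$, and the adaptive weight converges to the effective score $z_{\theta^*}\circ P_{\theta^*}^{-1}$, whereupon the argument of Corollary \ref{cor:pitman_likelihood_ratio} applies verbatim in these coordinates.

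The hard part is the replacement step in \textbf{(b)}: certifying that the \emph{data-dependent} weight $\hat w$ can be swapped for the deterministic $w_0$ without perturbing the limit. The delicate ingredients are the first-order cancellation $\int D = 0$ — without which the difference would be $O_P(1)$ rather than $o_P(1)$ — and a uniform-in-$\theta$ domination of the second-order remainder near the null; the $\Hnull$-independence of $\hat\theta$ from the ranks is what renders both manageable, and contiguity is what lets the whole estimate survive under $\Hone$.
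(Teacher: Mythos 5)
Parts (a) and (c) of your argument coincide with the paper's: for (a) the paper likewise conditions on the pooled sample (of which $\hat\theta$ is a measurable function) and uses the uniformity of the rank set given $\Z$ under $\Hnull$, and for (c) it simply re-runs the computation of Corollary \ref{cor:pitman_likelihood_ratio}. For (b), however, you take a genuinely different route. The paper never compares $T_{k,n}^{\hat w}$ to the oracle $T_{k,n}^{w_0}$; instead it applies the central limit theorem for sampling without replacement \emph{directly to the random score array} $\mathcal A_N=\{\hat w(jN^{-1})\}$ (and its augmentation $\mathcal A_N^+$), arguing that by $\hat\theta\to 0$ a.s.\ and the continuous mapping theorem the empirical mean and covariance of $\mathcal A_N$ converge almost surely to $0$ and $H_{z_0\circ P_0^{-1}}$, so the conditional CLT of Theorem \ref{thm:nonparametric_score} applies verbatim with a deterministic limiting covariance. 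Your oracle-replacement argument is valid and arguably more transparent about \emph{why} adaptivity is free: the key identity $\int z_\theta\circ P_\theta^{-1}=\E_{P_\theta}z_\theta=0$ for all $\theta$ (hence $\int D=0$) is exactly what makes the perturbation of the weight act only through a centered linear rank statistic. The trade-off is that, as written, your control of the second-order remainder needs $\hat\theta=O_P(N^{-1/2})$, which is stronger than the stated hypothesis $\hat\theta\to 0$ a.s.\ (though it is available under the paper's standing MLE-efficiency assumption); note you could dispense with the rate entirely by observing that the full remainder $\hat w - w_0 - \hat\theta D$ also integrates to zero, so that its rank sum is $O_P$ of its $L^2$ norm, i.e.\ $o_P(1)$ under mere consistency — this would put your argument on the same hypotheses as the paper's. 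Your use of contiguity to carry the $o_P(1)$ replacement over to $\Hone$ is correct and mirrors the role LeCam's third lemma plays in the paper's treatment.
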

\begin{proof}
    (c) follows from (b) and the same computation as in Corollary \ref{cor:pitman_likelihood_ratio}, and so only (a) and (b) remain to be shown.
    \begin{enumerate}[label=(\alph*)]
        \item Denoting by $\E^{X}$ the projection onto the sigma-algebra $\sigma(X)$ generated by $X$, one has for any continuous, bounded $f$ on $\R^m$
        \begin{multline*}
            \E^{\hat \theta}_{\Hnull} f(T_{k,n}^{\hat w}) = \E^{\hat \theta}_{\Hnull} \E^{\Z}_{\Hnull} f(T_{k,n}^{\hat w}) = \E^{\hat \theta}_{\Hnull} \left[ \frac{1}{\binom{N-1}{k-1}} \sum_{0 < m_1 < ... < m_n \leq N} f( \sum_{j=1}^n \hat w(m_j/N) ) \right]
            \\
            = \frac{1}{\binom{N-1}{k-1}} \sum_{0 < m_1 < ... < m_n \leq N} f( \sum_{j=1}^n \hat w(m_j/N) ),
        \end{multline*}
        since $\hat \theta$ is $\sigma(\Z)$-measurable, and $(\{ r_1, ..., r_n \} \mid \Z)$ is uniform over the $n$-subsets of $\{1, ..., N\}$ under $\Hnull$. The right-hand side is independent of $p_0$ (except through $\hat w$), therefore proving the claim.
        \item Setting $\mathcal A_N = \{ \hat w(jN^{-1}), j = 1,..., N\}$ and
        \begin{equation*}
            \mathcal A_N^+ = \left\{ \hat w\left( \frac{j}{N} \right) + e_{m+1} \otimes \left\langle z_0 \circ H_{k,n}^{-1} \left(\frac{j}{N}\right), \theta \right\rangle, j = 1, ..., N \right\},
        \end{equation*}
        the almost sure convergence of $\hat \theta$ combined with the continuous mapping theorem imply that
        \begin{equation*}
            \P \left[ \E_{\mathcal A_N} X \to 0, \Var_{\mathcal A_n} A_N \to \Sigma, \E_{\mathcal A_N^+} X \to 0, \Var_{\mathcal A_N^+} \to \Sigma^+ \right] = 1,
        \end{equation*}
        for some fixed choices of $\Sigma$ and $\Sigma^+$, and so the same reasoning as employed in Theorem \ref{thm:nonparametric_score} carries through. \qedhere
    \end{enumerate}
\end{proof}
Thus, similar to $T_{k,n}^{\Phi^{-1}}$ being preferable to the $t$-test at least asymptotically in the Pitman sense, $T_{k,n}^{\hat w}$ recovers performance of any likelihood-ratio test in the same sense, at least for alternatives within $\mathcal P$, while remaining non-parametric. Moreover, unlike the generalized likelihood-ratio test, $T_{k,n}^{\hat w}$ behaves well for alternatives falling outside $\mathcal P$.
\begin{theorem}
Given a second model $\mathcal Q$ with $\mathcal Q \setminus \mathcal P$ non-empty, local alternatives $q_{\gamma/\sqrt{N}}$ converging to $q_0 \in \mathcal Q \setminus \mathcal P$ (that is, $X\sim q_0, Y\sim q_{\gamma/\sqrt{N}}$), and almost sure convergence of $\hat\theta$ to $\theta^* = \arg\max_{\theta} \E_{q_0} \log p_{\theta}(Y)$, asymptotically
\begin{align*}
    &T_{k,n}^{\hat w} \stackrel{\Hnull}{\longrightarrow} \mathcal N\left(0, \alpha(1-\alpha) H_{z_{\theta^*} \circ P_{\theta^*}^{-1}} \right)
    &T_{k,n}^{\hat w} \stackrel{\textnormal{H}_{1,\mathcal Q}}{\longrightarrow} \mathcal N\left(\alpha(1-\alpha) K\gamma, \alpha(1-\alpha) H_{z_{\theta^*}\circ P_{\theta^*}^{-1}}\right),
\end{align*}
where $K = \int z_{\theta^*} \circ P_{\theta^*}^{-1} \otimes s_0(Y) \circ Q_0^{-1}$, for $s_0$ the score function associated with $\mathcal Q$ at $\gamma = 0$.
\label{thm:projection_model}
\end{theorem}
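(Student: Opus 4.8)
The plan is to mirror the arguments of Theorem~\ref{thm:nonparametric_score} and Theorem~\ref{thm:adaptive}, the only genuine novelty being that the data-generating null $q_0$ lies outside $\mathcal P$. This forces two limiting objects to be tracked separately: the adaptive weight $\hat w = z_{\hat\theta}\circ P_{\hat\theta}^{-1}$ stabilizes at the pseudo-true reference $z_{\theta^*}\circ P_{\theta^*}^{-1}$ (by the assumed $\hat\theta\to\theta^*$ together with continuity of $\theta\mapsto z_\theta\circ P_\theta^{-1}$), whereas the pooled empirical CDF $H_{k,n}$ converges to $Q_0$, the CDF of the true null $q_0$, rather than to $P_{\theta^*}$. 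The null variance, however, does not feel this mismatch: conditional on $\Z$ the ranks $\{r_j\}$ are uniform over $n$-subsets of $\{1,\dots,N\}$ irrespective of $q_0$ (this is precisely the distribution-freeness of Theorem~\ref{thm:adaptive}(a)), so the conditional second moment of $T_{k,n}^{\hat w}$ is $N^{-1}\sum_s \hat w(s/N)^{\otimes 2}$, which converges almost surely to $\int_0^1 (z_{\theta^*}\circ P_{\theta^*}^{-1})^{\otimes 2} = H_{z_{\theta^*}\circ P_{\theta^*}^{-1}}$, independently of $q_0$.

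For the $\Hnull$ limit I would set $\mathcal A_N = \{\hat w(j/N)\}$ and use the almost sure convergence of $\hat\theta$, the induced convergence of the first and second empirical moments of $\mathcal A_N$ to those of $z_{\theta^*}\circ P_{\theta^*}^{-1}$ under the uniform law, and the negligibility condition of Theorem~\ref{thm:nonparametric_score}, to invoke the conditional central limit theorem for simple random sampling. Since the limiting moments are deterministic, the conditional statement upgrades to the unconditional normality $T_{k,n}^{\hat w}\to\mathcal N(0,\alpha(1-\alpha)H_{z_{\theta^*}\circ P_{\theta^*}^{-1}})$ exactly as before.

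The alternative is handled through LeCam's third lemma applied to the log-likelihood ratio of $\mathcal Q$, namely $\rho_n = \sum_j \log q_{\gamma/\sqrt N}(Y_j)/q_0(Y_j)$, whose quadratic-mean-differentiable expansion is $\langle s_0(\Y)/\sqrt N,\gamma\rangle - \tfrac12\langle\gamma, H_{s_0\circ Q_0^{-1}}\gamma\rangle + o_{\Hnull}(1)$. Using the order-statistic identity $s_0(Y_j) = s_0\circ H_{k,n}^{-1}(r_j/N)$, I would augment $\mathcal A_N$ to $\mathcal A_N^+ = \{\hat w(j/N) + e_{m+1}\otimes\langle s_0\circ H_{k,n}^{-1}(j/N),\gamma\rangle\}$, so that $(T_{k,n}^{\hat w},\rho_n)$ is, up to a deterministic shift and $o_{\Hnull}(1)$, a size-$n$ sample without replacement from $\mathcal A_N^+$; almost sure convergence of the moments of $\mathcal A_N^+$ again yields joint null normality. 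The shift under $\textnormal{H}_{1,\mathcal Q}$ then equals $\lim \E_{\Hnull} T_{k,n}^{\hat w}\rho_n = (A_1+A_2)\gamma$, with the same diagonal/off-diagonal split as in Theorem~\ref{thm:nonparametric_score}. The only new ingredient is the substitution $s_0\circ H_{k,n}^{-1}(s/N)\to s_0\circ Q_0^{-1}(s/N)$, forced by $H_{k,n}\to Q_0$ under $\Hnull$: this gives $N^{-1}\sum_s \hat w(s/N)\otimes s_0\circ H_{k,n}^{-1}(s/N)\to \int_0^1 (z_{\theta^*}\circ P_{\theta^*}^{-1})\otimes(s_0\circ Q_0^{-1}) = K$, whence $A_1\to(1-\alpha)K$ and (carrying the negative sign of the off-diagonal conditional expectation) $A_2\to-(1-\alpha)^2 K$, so the shift is $\alpha(1-\alpha)K\gamma$.

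The hard part will be the misspecification bookkeeping in the cross-moment: one must justify that two distinct limiting objects — the weight stabilizing at the $\mathcal P$-reference $z_{\theta^*}\circ P_{\theta^*}^{-1}$ and the empirical quantile function stabilizing at the $\mathcal Q$-reference $Q_0^{-1}$ — combine correctly inside $A_1$ and $A_2$, and that replacing the random pair $(\hat\theta, H_{k,n}^{-1})$ by its limit is legitimate uniformly across the Riemann-type sums. Concretely, this requires almost sure uniform control of $\hat w = z_{\hat\theta}\circ P_{\hat\theta}^{-1}$ and of the empirical quantile process $H_{k,n}^{-1}$, together with square-integrability of both $s_0\circ Q_0^{-1}$ and $z_{\theta^*}\circ P_{\theta^*}^{-1}$, so that the data-dependent arrays $\mathcal A_N$ and $\mathcal A_N^+$ satisfy the hypotheses of the sampling-without-replacement central limit theorem on an event of probability one; the regularity of both $\mathcal P$ and $\mathcal Q$ is exactly what makes this uniform passage to the limit go through.
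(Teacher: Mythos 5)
Your proposal is correct and follows essentially the same route as the paper: the null limit is obtained exactly as in Theorem~\ref{thm:adaptive}(b) via the sampling-without-replacement CLT applied to $\mathcal A_N = \{\hat w(j/N)\}$, and the shift under $\textnormal{H}_{1,\mathcal Q}$ comes from LeCam's third lemma with the cross-moment $\E T_{k,n}^{\hat w}\rho_n$ split into the same diagonal and off-diagonal pieces $A_1 + A_2$, the only new ingredient being that $H_{k,n}^{-1}$ stabilizes at $Q_0^{-1}$ while $\hat w$ stabilizes at $z_{\theta^*}\circ P_{\theta^*}^{-1}$, yielding $K$. You even get the signs right where the paper's displayed limits for $A_1$ and $A_2$ appear to omit the minus sign on the off-diagonal term that is needed for the sum to equal $\alpha(1-\alpha)K$.
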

\begin{proof}
    The convergence proof under $\Hnull$ proceeds as in part (b) of Theorem \ref{thm:adaptive}. The shift under $\Hone$, is again given by
    \begin{equation*}
        \E T_{k,n}^{\hat w}\rho_n = \left( \frac{1-\alpha}{n} \E \sum_{j=1}^n \hat w(r_j) \otimes q_0 \circ H_{k,n}^{-1} (r_j) + \frac{1-\alpha}{n} \E \sum_{m\neq j} \hat w(r_m) \otimes s_0 \circ H_{k,n}^{-1}(r_m) \right)\gamma = (A_1 + A_2)\gamma,
    \end{equation*}
    with the same arguments as in Theorem \ref{thm:nonparametric_score} showing
    \begin{align*}
        &A_1 \xrightarrow{\Hnull} (1-\alpha) H_{z_{\theta^*} \circ P_{\theta^*}^{-1}, s_0 \circ Q_0}
        &A_2 \xrightarrow{\Hnull} (1-\alpha)^2 H_{z_{\theta^*} \circ P_{\theta^*}^{-1}, s_0 \circ Q_0},
    \end{align*}
    as desired.
\end{proof}
\begin{remark}
For the location and shift models the respective effective score functions $z_{\mu}\circ (P^{-1}+\mu) = -p'/p$ and $z_{\sigma}\circ (\theta P^{-1})(x) = -x p'(x)/p(x)$ are independent of the parameters, and so $T_{k,n}^{\hat w} = T_{k,n}^{z}$. In these cases, the distribution of the likelihood-ratio test under alternatives can be computed explicitly, and the efficiencies obtained from Theorem \ref{thm:projection_model} reduce to the well-known efficiencies considered in, e.g., \autocite{hodges1956efficiency} and \autocite{klotz1962nonparametric}.
\end{remark}
\begin{remark}
An important computational advantage of likelihood-ratio tests is their associated Wilks' phenomenon; that is, their limiting $\chi^2_m$-distribution does not depend on any parameters of the model, and in particular does not require estimation of the Fisher information $\E_{P_0} z_0\otimes z_0$. As $T_{k,n}^w$ is anchored in a Taylor expansion of the log-likelihood, dispensing with the Fisher information altogether is difficult, but of course a similar limiting $\chi^2_m$ distribution can be obtained from Theorems \ref{thm:nonparametric_score} and \ref{thm:adaptive}. For $M_{k,n}^w = \max_{\theta} \langle \theta, T_{k,n}^w\rangle - \| \theta \|_H$, asymptotically
\begin{align*}
    &M_{k,n}^{w} \stackrel{H_0}{\longrightarrow} \chi^2_{m}
    &M_{k,n}^{w} \stackrel{H_{1,\mathcal P}}{\longrightarrow} \left \| Z_m + \alpha(1-\alpha) H^{1/2}_{w,z_0\circ P_0^{-1}}\theta \right\|_2^2,
\end{align*}
where $Z_m\in \R^m$ is a standard normal variable. The same results hold true for $w$ replaced by $\hat w$.
\label{rmk:chi_square}
\end{remark}
Although the preceding analysis focused on Pitman efficiencies, and therefore local alternatives, it is useful to clarify the behavior of $T_{k,n}^{\hat w}$ under fixed $F,G$. As $T_{k,n}^w$ constitutes a linear approximation to the likelihood, it is not surprising that the usual non-linear KL-projection condition is replaced by a linear one. The following is essentially a restatement of Theorem 3.2 in \autocite{deb2021efficiency}, and is included for completeness.
\begin{theorem}[\autocite{deb2021efficiency}]
For $F\neq G$ fixed, tests based on $T_{k,n}^w$ are consistent as long as $\E_G w\circ (\alpha F + (1-\alpha)G ) \neq 0$. That is, under $X\sim F, Y\sim G$, $\mathbb P\left( \textnormal{reject }\Hnull \right) \to 1$.
\label{thm:consistency}
\end{theorem}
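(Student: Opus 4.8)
The plan is to show that under a fixed alternative $F\neq G$ the statistic $T_{k,n}^w$ diverges in norm at rate $\sqrt N$, while the rejection threshold supplied by the null limit of Theorem~\ref{thm:nonparametric_score} remains $O(1)$; consistency then follows immediately. Concretely, a test based on $T_{k,n}^w$ rejects $\Hnull$ when a fixed quadratic form in $T_{k,n}^w$ (equivalently $M_{k,n}^w$ of Remark~\ref{rmk:chi_square}) exceeds a quantile of its limiting $\chi^2_m$ law, so it suffices to prove that $\|T_{k,n}^w\|_2 \to \infty$ almost surely whenever $c := \E_G\, w\circ(\alpha F + (1-\alpha)G) \neq 0$.

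First I would pin down the deterministic limit of the pooled empirical distribution. Since $X_i \iid F$ and $Y_j \iid G$ with $k/N\to\alpha$, the Glivenko--Cantelli theorem yields $\sup_x|F_k(x)-F(x)|\to 0$ and $\sup_x|G_n(x)-G(x)|\to 0$ a.s., hence $\sup_x|H_{k,n}(x)-H(x)|\to 0$ a.s.\ with $H=\alpha F+(1-\alpha)G$. Next I would show $\tfrac1n\sum_{j=1}^n w(H_{k,n}(Y_j))\to c$ a.s.\ through the decomposition
\begin{equation*}
    \frac1n\sum_{j=1}^n w(H_{k,n}(Y_j)) = \frac1n\sum_{j=1}^n w(H(Y_j)) + \frac1n\sum_{j=1}^n\bigl[w(H_{k,n}(Y_j)) - w(H(Y_j))\bigr].
\end{equation*}
The first term is an average of i.i.d.\ terms and converges to $\E_G\, w(H(Y)) = c$ by the strong law; the second is controlled by the uniform continuity of $w$ on compact subintervals together with $\sup_x|H_{k,n}-H|\to 0$, the contributions near the endpoints $0,1$ being handled by a truncation argument invoking the square-integrability of $w$.

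The delicate point, and the one I expect to be the main obstacle, is that the summands $w(H_{k,n}(Y_j))$ are \emph{not} independent: $H_{k,n}$ is built from the entire pooled sample and so depends on every $Y_j$. This is exactly what the above decomposition circumvents, by transferring the randomness of the integrand $H_{k,n}$ onto the deterministic limit $H$ via uniform convergence, after which only the genuinely i.i.d.\ average $\tfrac1n\sum_j w(H(Y_j))$ carries statistical content. Granting this, $T_{k,n}^w = \tfrac{1}{\sqrt N}\sum_{j=1}^n w(H_{k,n}(Y_j)) = \tfrac{n}{\sqrt N}\,(c+o(1))$, and since $n/\sqrt N = \sqrt N\,(n/N)\to\infty$, at least one coordinate of $T_{k,n}^w$ diverges a.s.\ whenever $c\neq 0$. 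Because $H_w$ is full-rank the rejection quadratic form is comparable to $\|T_{k,n}^w\|_2^2$, which therefore exceeds any fixed critical value with probability tending to one, giving $\P(\textnormal{reject }\Hnull)\to 1$ as claimed.
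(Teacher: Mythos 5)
The paper does not actually prove this statement: it is imported from Theorem~3.2 of \autocite{deb2021efficiency} and stated without proof ``for completeness,'' so there is no in-paper argument to compare against. Your proposal is the standard consistency argument and is essentially sound: Glivenko--Cantelli gives $\sup_x|H_{k,n}(x)-H(x)|\to 0$ a.s., the strong law handles the genuinely i.i.d.\ average $n^{-1}\sum_j w(H(Y_j))\to c$ (integrability follows since $dG\le (1-\alpha)^{-1}dH$ and $H(Y)$ is uniform under $Y\sim H$, so $\E_G\|w\circ H\|\le (1-\alpha)^{-1}\int_0^1\|w\|<\infty$), and $n/\sqrt N\to\infty$ then forces the test statistic past any fixed quantile of its null limit. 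Your diagnosis of the main obstacle --- the dependence of $H_{k,n}$ on all of $\Y$ --- and the decomposition that neutralizes it are exactly right, and this is in substance the route taken in the cited reference.

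Two points deserve more care if you write this out in full. First, the endpoint truncation is the only genuinely delicate step: for unbounded scores such as $\Phi^{-1}$ or $\Phi^{-2}$ you must control $n^{-1}\sum_j |w(r_j/N)|$ over ranks $r_j/N$ falling within $\delta$ of $0$ or $1$, and continuity plus square-integrability of $w$ on $(0,1)$ alone do not guarantee convergence of the relevant Riemann sums; one needs a monotonicity or Chernoff--Savage-type growth condition near the endpoints. The paper operates at the same level of informality elsewhere (e.g., in verifying the Lindeberg-type condition in Theorem~\ref{thm:nonparametric_score}), so this is a shared rather than a new gap, but you should not present ``square-integrability'' as if it settled the matter. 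Second, your final step tacitly assumes the rejection statistic detects the direction of $c$: divergence of $\|T_{k,n}^w\|_2$ yields rejection only if $H_w$ is positive definite (or at least if $c$ is not annihilated by the pseudo-inverse appearing in the quadratic form); since the theorem as stated imposes no rank condition on $H_w$, this hypothesis should be made explicit.
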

\begin{remark}
For the Mann-Whitney test, $w(x) = x-1/2$, and so the condition $\E_g w\circ (\alpha F + (1-\alpha)G )\neq 0$ amounts to $0 \neq \E_g F - 1/2 = \mathbb P(X<Y) -1/2$, recovering a well-known result that follows from the original work in \autocite{mann1947test}, but was pointed out explicitly by \autocite{lehmann1951consistency}.
\end{remark}

\section*{Extension to multivariate sample spaces}

Part of the appeal of likelihood-based methods is their straightforward extension to the multivariate setting. Once a model is specified, be it on $\R$ or $\R^p$, quantities like $\lambda_{\mathcal P}$ behave similarly, at least asymptotically, with their limiting distribution solely depending on $m$ rather than $p$. Such phenomena invite similar considerations for $T_{k,n}^w$. The previous section's analysis relied heavily on the empirical CDF $H_{k,n}$ of $\Z$, whose relevant properties are most directly derived from the canonical ordering of $\R$. Recent work by \autocite{hallin2020distribution} revealed that it is, in fact, not the canonical ordering that gives rise to such desired properties, but rather that they may be recovered through a certain minimization task to which the rank map constitutes a solution. This observation ties ranks to the field of optimal transport \autocite[see, e.g., ][]{peyre2019computational}, thereby enabling their generalization to the multivariate setting; see the work of \autocite{hallin2020distribution} itself, as well as the broad exposition given in \autocite{deb2021efficiency} for how precisely it does so. The univariate behavior of $T_{k,n}^w$ readily extends to this situation.
\begin{theorem}
For $w$ square-integrable and continuously differentiable on $[0,1]^p$, $H_{k,n}$ the optimal transport map from $\Z\subset \R^p$ to $\mathcal C_N^p\subset [0,1]^p$, where $\mathcal C_N^p$ is any set of points whose associated empirical point measure weakly to the uniform law on $[0,1]^p$ and $\max_{c\in \mathcal C_N^p} N^{-1}\| c \|_2^2 \to 0$, define $T_{k,n}^w(\X, \Y) = w\circ H_{k,n}(\Y) / \sqrt{N}$. Then the statements of Theorems \ref{thm:nonparametric_score}-\ref{thm:consistency} remain true with the domain of integration defining $H = \int w\otimes w$ replaced by $[0,1]^p$, and CDFs interpreted as optimal transport maps from their respective distributions to $\mathrm{Uniform}\left( [0,1]^p \right)$.
\label{thm:multivariate_sample_space}
\end{theorem}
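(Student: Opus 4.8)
The plan is to observe that none of the proofs of Theorems \ref{thm:nonparametric_score}--\ref{thm:consistency} used the canonical order of $\R$ directly; they used only two structural features of $H_{k,n}$ that survive the passage to optimal transport. First, when $\Z$ and $\mathcal C_N^p$ are both uniform over $N$ atoms the transport problem is an assignment problem whose optimum is a permutation, so $H_{k,n}\colon \Z \to \mathcal C_N^p$ is a bijection determined by the unordered pooled sample alone, with $H_{k,n}^{-1}\circ H_{k,n} = \mathrm{id}$ on the data. Second, under $\Hnull$ the labelling of the $N$ pooled points into $\X$ and $\Y$ is exchangeable. Together these give, conditionally on $\Z$, that $\{H_{k,n}(Y_1),\dots,H_{k,n}(Y_n)\}$ is a uniformly random $n$-subset of the deterministic grid $\mathcal C_N^p$. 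The whole strategy is therefore to re-run each earlier argument with the deterministic set $\mathcal A_N = \{w(c) : c \in \mathcal C_N^p\}$ replacing $\{w(j/N) : j \le N\}$.

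With this substitution the null statements are essentially immediate. Because $\mathcal C_N^p$ (and hence $\mathcal A_N$) is non-random, the exact distribution-freeness of part (a) of Theorem \ref{thm:adaptive} goes through line for line: conditioning on $\Z$ leaves $T_{k,n}^w$ a symmetric function of a uniform $n$-subset of a grid that does not depend on $P_0$. For the asymptotic null law I would again invoke the multivariate CLT for simple random sampling of \autocite{rosen1964limit}, now applied to $\mathcal A_N$. Its negligibility hypothesis $\max_{c\in\mathcal C_N^p} \|H_w^{-1/2} w(c)\|^2/N \to 0$ follows from the continuity of $w$ on the compact cube $[0,1]^p$, which bounds $w$, while the convergence of the sampling covariance $N^{-1}\sum_{c} w(c)\otimes w(c) \to \int_{[0,1]^p} w\otimes w = H_w$ follows from the assumed weak convergence of the empirical measure of $\mathcal C_N^p$ to the uniform law together with boundedness of the continuous integrand $w\otimes w$. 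The variance computation then reproduces $\alpha(1-\alpha)H_w$ exactly as before.

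For the contiguous alternatives I would reuse the LeCam-third-lemma computation of Theorem \ref{thm:nonparametric_score}, augmenting $\mathcal A_N$ by the block $e_{m+1}\otimes \langle z_0\circ H_{k,n}^{-1}(c), \theta\rangle$ indexed by $c \in \mathcal C_N^p$. The only place where genuinely multivariate input is needed is the step that in the univariate proof was a Glivenko--Cantelli statement: the cross moment $N^{-1}\sum_{c} w(c)\otimes z_0\circ H_{k,n}^{-1}(c)$ must converge to $H_{w,\,z_0\circ P_0^{-1}} = \int_{[0,1]^p} w\otimes (z_0\circ P_0^{-1})$, where $P_0^{-1}$ is now the inverse population transport map from $\mathrm{Uniform}([0,1]^p)$ to $P_0$. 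Since the pairing $(c, H_{k,n}^{-1}(c))$ is exactly the empirical transport plan, this is the assertion that the empirical map $H_{k,n}$ converges to the population map, which is precisely the stability and consistency theory for empirical optimal transport maps developed in \autocite{hallin2020distribution} and \autocite{deb2021efficiency}; I would cite this to replace the empirical-CDF convergence used earlier. With that convergence in hand, the computations of $A_1, A_2$ and their cancellation to $\alpha(1-\alpha)H_{w,z_0\circ P_0^{-1}}$, and likewise the projection-model shift $K$ of Theorem \ref{thm:projection_model} and the fixed-alternative mean driving consistency in Theorem \ref{thm:consistency}, all carry over unchanged.

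The main obstacle is thus not combinatorial but analytic: importing uniform (or $L^2$) convergence of the empirical transport map $H_{k,n}$ to its population counterpart at the level of precision needed to control the augmented cross moments, and checking that the condition $\max_{c} N^{-1}\|c\|_2^2\to 0$ (automatic since $c\in[0,1]^p$) and the requisite integrability of $z_0\circ P_0^{-1}$ survive the replacement of $P_0^{-1}$ by an inverse transport map. All of these are supplied by the cited optimal-transport literature, after which the remaining algebra is identical to the univariate case, completing the proof.
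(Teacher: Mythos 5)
Your proposal is correct and follows essentially the same route as the paper: both isolate the same two structural properties (the image $\{H_{k,n}(Y_j)\}$ being a uniform $n$-subset of the deterministic grid $\mathcal C_N^p$ under $\Hnull$, and the weak convergence of that grid to $\mathrm{Uniform}([0,1]^p)$), and then rerun the sampling-without-replacement CLT and LeCam-third-lemma computations verbatim. If anything, you are more explicit than the paper about the one genuinely new analytic ingredient --- the consistency of the empirical transport map needed for the cross-moment $N^{-1}\sum_c w(c)\otimes z_0\circ H_{k,n}^{-1}(c)$ --- which the paper compresses into the phrase that the covariance ``is identical to its univariate counterpart \dots by part (b).''
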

\begin{proof}
    Statements about asymptotic distributions under $\Hnull$ relied solely on two properties:
    \begin{enumerate}[label=(\alph*)]
        \item $\{ H_{k,n}(Y_1), ..., H_{k,n}(Y_n) \} = \{ r_1, ..., r_n \}$ is uniformly distributed over all $n$-subsets of $\{jN^{-1}, j=1, ..., N\}$; or equivalently, it is a sample without replacement of size $n$ from $\{jN^{-1}, j=1,..., N \}$.
        \item $\{jN^{-1}, j=1,..., N\} = \{ H_{k,n}(Z_1), ..., H_{k,n}(Z_N)\}$ converges weakly to $\mathrm{Uniform}([0,1])$.
    \end{enumerate}
    When interpreted as an optimal transport map between two point clouds in $\R^p$, $H_{k,n}$ still satisfies both (a) and (b) with $\{jN^{-1}, j=1, ..., N\}$ and $[0,1]$ replaced by $\mathcal C_N^p$ and $[0,1]^p$, respectively, and so proofs in Theorems \ref{thm:nonparametric_score}-\ref{thm:consistency} concerning $\Hnull$ may be reused essentially verbatim. To see that convergence under local contiguous alternatives remains unchanged, it suffices to observe that under $\Hnull$ $(T_{k,n}^w, \rho_n)$ can still be written as a simple random sample from $\mathcal A_N^+$ by property (a), and $\mathrm{Cov}(T_{k,n}^w, \rho_n)$ is identical to its univariate counterpart with $[0,1]$ replaced by $[0,1]^p$ by part (b).
\end{proof}
\begin{example*}[Gaussian location model]
For the Gaussian location model $\mathcal P = \{ \mathcal N(\theta, 1_{m}) : \theta\in\R^m \}$ and $\hat w = \Phi^{-1}$ (applied component-wise), $T_{k,n}^{\hat w}(\X, \Y)$ is exactly equivalent to the rank Hotelling-$T^2$ test with Gaussian effective reference distribution, as proposed in \autocite{deb2021efficiency}. In particular, it exhibits the same favorable Pitman efficiencies compared to Hotelling's $T^2$ against a large class of alternatives.
\label{ex:gauss_location}
\end{example*}
Theorem \ref{thm:multivariate_sample_space} provides theoretical insight, and is useful for sample sizes $N$ that are small enough for optimal transport maps to be tractably computed, yet large enough for asymptotic statements to become relevant. As discussed previously, these constraints limit practical applications. The following theorem softens these constraints by accelerating computation.
\begin{theorem}
    For $w$ as in Theorem \ref{thm:multivariate_sample_space}, $N=a^p$ and $\mathcal C_N^p = \{j\cdot a^{-1}, j=0, ..., a-1\}^p$, define the $j$-th component $H_{k,n}^j(z)$ of $H_{k,n}(z)$ (for $z$ in $\Z$) iteratively through $H_{k,n}^j(z) = \lfloor r_j(z)/a^{p-j} \rfloor \cdot a^{-1}$, where $r_j(z)$ is one less than the rank of $\langle e_j, z \rangle$ among $\{\langle e_j, s\rangle, s\in \Z: r_{j-1}(s) = r_{j-1}(z)\}$ and $r_0 \equiv 0$. Then Theorem \ref{thm:multivariate_sample_space} holds with this choice of $H_{k,n}$, and the entire two-sample test can be carried out in $O(pN\log N)$.
    \label{thm:conditional_map}
\end{theorem}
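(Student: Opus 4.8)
The plan is to show that the recursively defined $H_{k,n}$ enjoys exactly the structural properties on which the proof of Theorem~\ref{thm:multivariate_sample_space} rests, and then to count the arithmetic operations. Recall that that proof invoked $H_{k,n}$ only through (a) its being a bijection $\Z \to \mathcal C_N^p$ that is a function of the \emph{unordered} pooled sample alone; (b) the equidistribution of $\mathcal C_N^p$ towards $\mathrm{Uniform}([0,1]^p)$ together with $\max_{c\in\mathcal C_N^p} N^{-1}\|c\|_2^2 \to 0$; and (c), implicitly, the convergence of $H_{k,n}$ to a fixed population map $R$ transferring $P_0$ to $\mathrm{Uniform}([0,1]^p)$, which is what makes the covariance sums defining $H_w$ and the LeCam drift converge to integrals over $[0,1]^p$. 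Once (a)--(c) are verified, every displayed limit in the proofs of Theorems~\ref{thm:nonparametric_score}--\ref{thm:consistency} carries over verbatim with this $R$, and the distributional half of the theorem follows.

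For (a), observe that $N = a^p$ makes the construction a balanced $a$-ary recursion: level~$1$ sorts all $N$ points by their first coordinate and splits them into $a$ blocks of size $a^{p-1}$, and level~$j$ sorts the coordinate-$j$ values \emph{within} each current block of size $a^{p-j+1}$ and splits it into $a$ sub-blocks of size $a^{p-j}$; after $p$ levels each point occupies a distinct grid cell, so $H_{k,n}$ is a bijection onto $\mathcal C_N^p$. Absolute continuity of $F$ removes ties almost surely, so every sort is well defined, and---crucially---each sort ranks coordinates among pooled points only, so $H_{k,n}$ is a function of $\Z$ as an unordered set. Under $\Hnull$ the labels are exchangeable, whence $\{H_{k,n}(Y_j)\}$ is a uniformly random $n$-subset of $\mathcal C_N^p$, which is property~(a). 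Property~(b) is immediate: with $a = N^{1/p}\to\infty$ the regular grid $\{0,1/a,\dots,(a-1)/a\}^p$ converges weakly to the uniform law, and $\max_{c} N^{-1}\|c\|_2^2 \le p/N \to 0$.

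The substance lies in (c). The population analogue of this recursion is the Knothe--Rosenblatt rearrangement $R$ of $P_0$ onto $[0,1]^p$, whose first component is the marginal CDF of the first coordinate and whose $j$-th component is the conditional CDF of the $j$-th coordinate given the first $j-1$. I would prove $H_{k,n}\to R$ level by level: the first-level empirical block boundaries converge to the population marginal quantiles by Glivenko--Cantelli, and an induction over coordinates propagates this through the conditional stages, using that each block's empirical law converges to the appropriate conditional law of $P_0$. With $H_{k,n} \to R$ uniformly on the support, the sums $\tfrac1N\sum_s w(H_{k,n}(Z_s))\otimes z_0(Z_s)$ and their analogues converge to $\int_{[0,1]^p} w \otimes z_0\circ R^{-1}$, so the null covariance $H_w = \int_{[0,1]^p}w\otimes w$ and the drift $\alpha(1-\alpha)H_{w,z_0\circ R^{-1}}\theta$ obtain exactly as before, now with effective score $z_0\circ R^{-1}$ built from the Knothe--Rosenblatt map rather than the Brenier map. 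I expect this uniform convergence of the \emph{triangular} empirical transport to be the main obstacle: the null calibration requires nothing beyond exchangeability and the regular grid, but the efficiency statements hinge on controlling conditional-quantile estimation simultaneously across the $a^{j-1}$ increasingly fine blocks.

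Finally, for the complexity, each level sorts the $N$ coordinate values within the current blocks; since every block has at most $N$ elements, this costs $\sum_b O(n_b\log n_b)\le O(N\log N)$ per level, and with $p$ levels the total sorting cost is $O(pN\log N)$. Equivalently, level~$j$ costs $a^{j-1}\cdot O\big(a^{p-j+1}\log(a^{p-j+1})\big) = O(N(p-j+1)\log a)$, which sums to $O(p^2N\log a) = O(pN\log N)$ since $p\log a = \log N$. Recording bin indices and assembling the grid images is $O(pN)$, and evaluating $w$ at the $n$ images of $\Y$ and summing to form $T_{k,n}^w$ is again $O(pN)$, while the normalizer $H_w$ is sample-independent. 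The sorting therefore dominates, and the entire test runs in $O(pN\log N)$.
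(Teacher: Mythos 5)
Your proof is correct and follows essentially the same route as the paper's: verify that the layered ranking map retains properties (a) and (b) so the $\Hnull$ arguments transfer verbatim, note convergence to the population conditional-CDF (Knothe--Rosenblatt) map \eqref{eq:population_conditional} for the $\Hone$ covariance computation, and sum the per-level sorting costs to get $O(pN\log N)$. You are in fact more explicit than the paper on the one genuinely delicate point --- uniform convergence of the empirical triangular transport to its population analogue across the increasingly fine conditional blocks --- which the paper simply asserts without proof, so flagging it as the main obstacle is well judged.
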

\begin{proof}
    As in the proof of Theorem \ref{thm:multivariate_sample_space}, $H_{k,n}$ satisfies both properties (a) and (b) and thus the conclusion under $\Hnull$ follows. Similarly, the result under $\Hone$ follows by an identical computation of $\operatorname{Cov}(T_{k,n}^w, \rho_n)$ and noting that $H_{k,n}$ converges (under $\Hnull$) to the (invertible) population map
    \begin{equation}
        P_0: (z_1, ..., z_p) \mapsto \left(\P_0\left[Z_{\ell} \leq z_{\ell} \mid Z_j = z_j, j=1, ..., \ell-1\right],\ell=1, ..., p \right),
        \label{eq:population_conditional}
    \end{equation}
    where $\P_0$ is the measure governing $\Z$. The computational complexity is obtained by observing that the $j$-th layer requires sorting $a^{p+1-j}$ elements $a^{j-1}$ times, resulting in $O\left(\sum_{j=1}^p a^{j-1}\cdot a^{p+1-j}\log a^{p+1-j}\right) = O\left( pN \log N \right)$ calculations.
\end{proof}
\begin{remark}
    The assumption of $N$ being a perfect power of $p$ is simply to streamline the proof of Theorem \ref{thm:conditional_map} and communicate its key ingredients more clearly. It can be relaxed to arbitrary $N$ in a straightforward manner: instead of $\mathcal C_n^p$ being a product space of ranks, it assigns $\approx [N^{1/p}]$ along each marginal direction, balancing points as evenly as possible. $H_{k,n}$ then performs the same ranking procedure as before. This general testing algorithm is implemented in the code provided with this paper.
\end{remark}
\begin{remark}
    When an underlying parametric family $\mathcal P$ is assumed and efficiency desired, the weight computation/estimation requires inversion of the population map \eqref{eq:population_conditional}, just like efficiency in the context of Theorem \ref{thm:multivariate_sample_space} requires inversion of a (population) optimal transport map. While optimal transport maps are typically difficult to invert (indeed, closed-form formulae are rare even for the forward map), the explicit expression in \eqref{eq:population_conditional} should facilitate such inversion. Moreover, even when such explicit inversion is intractable, the low computational complexity of $H_{k,n}$ enables approximations through, e.g., simulations.
\end{remark}

\section*{Considerations beyond efficiency}

Although the framework presented above is fully efficient in the case of correct model specification, the more appealing feature is arguably its robustness against model misspecification. This robustness can be exploited for applications that might otherwise be challenging in a likelihood context. This section details two such instances.

\subsection*{Projecting out nuisance alternatives}

The outcomes of scientific experiments are typically influenced by two sources of randomness: intrinsic fluctuations of the phenomenon of interest (e.g., particle paths behaving approximately like an ensemble of Brownian motions) and measurement error (e.g., variations in photon densities reaching the microscope, diffraction, plate contamination, etc.); these two sources will be referred to as signal and noise. While differences in signal are usually meaningful for understanding the underlying phenomenon at hand, variation in noise is merely reflective of inconsistencies in the measurement process itself, and so is ideally discarded. That is, given a sample $\X$ drawn from some $p = p(p_{\text{signal}}, p_{\text{noise}})$ whose details depend on signal and noise distributions $p_{\text{signal}}, p_{\text{noise}}$, respectively, it is of interest to decide whether the generating mechanism of a second sample $\Y$ differs from that of $\X$ in the signal component $p_{\text{signal}}$, whereas changes in the noise contribution $p_{\text{noise}}$ are to be disregarded. A special case of such situation is given by models of the form $\mathcal P = \{ p_{\theta_0, \theta_1} : \theta_j \in \Theta_j \}$ for some parameter spaces $\Theta_0,\Theta_1$, where the signal varies in $\Theta_0$ with the noise parametrized by $\Theta_1$. Writing $\X \stackrel{\textnormal{iid}}{\sim} p_{\theta_0,\theta_1}, \Y\stackrel{\textnormal{iid}}{\sim} p_{\vartheta_0, \vartheta_1}$, the task then becomes to decide whether $\theta_0 = \vartheta_0$ against $\vartheta_0 = \theta_0 + a/\sqrt{N}$.

\begin{example*}[Mixture location model]
\par{In genomics it is often of interest to detect \textit{differential expression} of genes; that is, whether the distribution of a certain gene attribute (say, expression) varies across conditions. One approach to do so consists of collecting tissue samples across the conditions of interest, aggregating gene expression within each condition, and then comparing the resulting measurements via two-sample tests. Such procedure, however, is confounded by tissue composition; that is, distinct tissue samples may be comprised of differing proportions of cell types that constitute the tissue. Generally, genes exhibit differential expression across cell types, and consequently, most two-sample tests will detect shifts in proportions even when expression profiles themselves remain unaltered \autocite{lowe2014correcting}. So-called cell-type deconvolution methods provide one attempt to overcome such confounding by inferring cell-type proportions from the data prior to any differential expression analysis \autocite[see, e.g., the discussion in][]{avila2020benchmarking}. The majority of such methods, though, rely on cell-type specific gene expression reference panels, and thereby implicitly assume the absence of any differential expression. How they interact with differential expression is generally unclear. Moreover, distributional properties of the inferred proportions are (with few exceptions \autocite[see, e.g., ][]{erdmann2021likelihood}) not provided, and thus the precise manner in which proportion estimates ought to enter two-sample tests remains ambiguous.}
\par{As long as the number of cells contained in each tissue sample is sufficiently large, the central limit theorem can be invoked to describe the distribution of $g$ genes' expression $X\in\R^g$ in a single tissue sample as $\mathcal N(M\rho, W_{(M\odot M)\rho + \Sigma\rho})$, where $\rho \in \Delta^{c-1}$ are the proportions of each of $c>1$ cell types, $W_x$ is the diagonal matrix with vector $x$ on its diagonal, $\odot$ represents the entry-wise product, and $M, \Sigma \in \R^{g\times c}$ record the expectations and variances of every gene's expression profile in each cell type \autocite{erdmann2021likelihood}. When tissue samples share similar proportions and behave independently and identically under a given condition, one is thus interested in testing whether the signal $M$ differs across conditions, while ignoring the noise $\rho$.}
\label{ex:differential_expression}
\end{example*}

Theorem \ref{thm:nonparametric_score} shows that local alternatives with associated score function $z$ induce a shift in $T_{k,n}^w$ proportional to $\mu_w(z) = H_{w,z\circ P^{-1}} \theta = \int w \langle z\circ P^{-1}, \theta \rangle$. Consequently, maximizing power along a signal $z_0$, while guaranteeing robustness against a nuisance model $z_1$ reduces to solving a linear program in $\mathcal K = L^2[(0,1)^p]^m$.

\begin{theorem}
For $\hat \theta = \arg\max_{\theta} p_{\theta}(\Z)$, and 
\begin{align*}
    &\begin{pmatrix}
        \hat w_0 \\ \hat w_1
    \end{pmatrix}
    = z_{\hat \theta}\circ P_{\hat \theta}^{-1}
    &\begin{pmatrix}
        w_0 \\ w_1
    \end{pmatrix}
    = z_{0}\circ P_{0}^{-1},
\end{align*}
define the effective weight $\hat w = \hat w_0 - \hat H_{w_0, w_1} \hat H_{w_1}^{-1} \hat w_1$ with $\hat H_{w_0,w_1}$ and $\hat H_{w_1}$ the sub-matrices of the usual estimates of the Fisher information associated with $\mathcal P$ at $\theta = 0$. Then as long as $\hat \theta \to 0$ almost surely,
\begin{align*}
    &T_{k,n}^{\hat w} \xrightarrow{\Hnull} \mathcal N(0, \alpha(1-\alpha) (H_0 - H_{0,1}H_1^{-1}H_{1,0})) 
    \\
    &T_{k,n}^{\hat w} \xrightarrow{\Hone} \mathcal N(\alpha(1-\alpha) (H_0 - H_{0,1}H_1^{-1}H_{1,0})(e_1 \otimes 1_m)\theta, \alpha(1-\alpha) (H_0 - H_{0,1}H_1^{-1}H_{1,0})),
\end{align*}
where $H_0 = H_{w_0}, H_1 = H_{w_1}, H_{j,k} = H_{w_j, w_k}$. In particular, $\E_{\Hone} T_{k,n}^{\hat w} \to 0$ whenever $\theta = (0, \theta_1) \in \Theta_0 \times \Theta_1$ for any $\theta_1$.
\label{thm:signal_noise}
\end{theorem}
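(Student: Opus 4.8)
The plan is to recognize $\hat w$ as an adaptive weight of exactly the type handled by Theorem~\ref{thm:adaptive}, and then to reduce the whole statement to two tensor-integral computations that identify the limiting variance and mean-shift. Since $\hat\theta\to 0$ almost surely and the plug-in estimates $\hat H_{w_0,w_1},\hat H_{w_1}$ are continuous functionals of $\hat\theta$, the continuous mapping theorem gives $\hat H_{w_0,w_1}\to H_{0,1}$ and $\hat H_{w_1}\to H_1$ (so that, assuming $H_1$ invertible, $\hat H_{w_1}^{-1}$ is eventually well-defined and converges to $H_1^{-1}$), and hence $\hat w\to w^* := w_0 - H_{0,1}H_1^{-1}w_1$ almost surely. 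First I would check that $w^*$ inherits square-integrability and continuity from $w_0,w_1$, so that the hypotheses of Theorem~\ref{thm:nonparametric_score} apply to the limiting weight; the adaptive argument of Theorem~\ref{thm:adaptive}(b)---writing $(T_{k,n}^{\hat w},\rho_n)$ as a sample without replacement from the augmented set $\mathcal A_N^+$ whose empirical mean and variance converge almost surely---then carries through verbatim and leaves only the evaluation of $H_{w^*}$ and the cross-term $H_{w^*,z_0\circ P_0^{-1}}$.

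The remaining computation is purely algebraic. Writing $B=H_{0,1}H_1^{-1}$ and using $\int w_a\otimes w_b = H_{a,b}$ together with bilinearity of $\int(\cdot)\otimes(\cdot)$, I would expand
\begin{equation*}
    H_{w^*} = \int (w_0-Bw_1)\otimes(w_0-Bw_1) = H_0 - H_{0,1}B^T - BH_{1,0} + BH_1B^T.
\end{equation*}
Since $B=H_{0,1}H_1^{-1}$ gives $B^T=H_1^{-1}H_{1,0}$ and $BH_1B^T=H_{0,1}H_1^{-1}H_{1,0}$, each of the last three terms equals $H_{0,1}H_1^{-1}H_{1,0}$, so $H_{w^*}=H_0-H_{0,1}H_1^{-1}H_{1,0}$, the claimed Schur complement and hence the limiting variance up to the factor $\alpha(1-\alpha)$. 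For the shift, Theorem~\ref{thm:nonparametric_score} yields $\alpha(1-\alpha)H_{w^*,z_0\circ P_0^{-1}}\theta$, and splitting $z_0\circ P_0^{-1}=(w_0,w_1)$ into its signal and noise blocks gives
\begin{align*}
    &\int w^*\otimes w_0 = H_0 - BH_{1,0} = H_0 - H_{0,1}H_1^{-1}H_{1,0},
    &\int w^*\otimes w_1 = H_{0,1} - BH_1 = 0.
\end{align*}
Thus $H_{w^*,z_0\circ P_0^{-1}}$ has vanishing noise block, so contracting against $\theta$ annihilates the nuisance directions and leaves $(H_0-H_{0,1}H_1^{-1}H_{1,0})(e_1\otimes 1_m)\theta$, exactly as stated. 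The final assertion is then immediate: for $\theta=(0,\theta_1)$ the surviving signal block of $\theta$ is zero, whence $\E_{\Hone}T_{k,n}^{\hat w}\to 0$ for every $\theta_1$.

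The main obstacle I anticipate is not this algebra but the justification of adaptive normality for an \emph{estimated linear combination} of scores rather than for the raw score $z_{\hat\theta}\circ P_{\hat\theta}^{-1}$ treated in Theorem~\ref{thm:adaptive}. One must ensure that replacing $w^*$ by $\hat w$ perturbs the empirical moments $N^{-1}\sum_a a$ and $N^{-1}\sum_a a^2$ over the random weight set $\mathcal A_N$ by only $o_{\Hnull}(1)$, which requires the almost-sure convergence $\hat w\to w^*$ to hold jointly with the consistency $\hat H_{w_1}^{-1}\to H_1^{-1}$ and with the uniform smallness $\max_{a\in\mathcal A_N}N^{-1}\|H_{w^*}^{-1/2}a\|^2\to 0$. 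That last condition needs $H_{w^*}$, i.e.\ the Schur complement, to be nonsingular, so this is where I would concentrate the care, invoking the full-rank hypothesis on $H_{z_0\circ P_0^{-1}}$ to guarantee invertibility of its Schur complement and thereby the well-posedness of both the covariance and the standardizing transformation.
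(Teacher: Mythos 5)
Your proposal is correct and arrives at the same algebra (the Schur complement $H_0-H_{0,1}H_1^{-1}H_{1,0}$ as limiting covariance, the vanishing noise block $\int w^*\otimes w_1 = H_{0,1}-H_{0,1}H_1^{-1}H_1=0$ giving the annihilation of nuisance directions), but it handles the probabilistic step differently from the paper. You propose to re-run the adaptive sampling-without-replacement CLT of Theorem~\ref{thm:adaptive}(b) with the combined weight $\hat w$ itself, and you correctly flag as the main difficulty the need to control the empirical moments of the weight set when the weight is an \emph{estimated linear combination} of scores, together with nonsingularity of the Schur complement. The paper dissolves that difficulty entirely by exploiting linearity of $T_{k,n}^{w}$ in $w$: it writes $T_{k,n}^{\hat w}=\hat\Pi\, T_{k,n}^{z_{\hat\theta}\circ P_{\hat\theta}^{-1}}$ with $\hat\Pi=e_1\otimes 1_m-e_2\otimes \hat H_{0,1}\hat H_1^{-1}\to\Pi$ by the continuous mapping theorem, invokes Theorem~\ref{thm:adaptive} for the convergence of $T_{k,n}^{z_{\hat\theta}\circ P_{\hat\theta}^{-1}}$, and concludes by Slutsky that the limit is $\mathcal N(0,\alpha(1-\alpha)\,\Pi H_{z_0\circ P_0^{-1}}\Pi^*)$; your identity $H_{w^*}=\Pi H_{z_0\circ P_0^{-1}}\Pi^*$ is then exactly the paper's ``quick computation.'' The trade-off: your route is self-contained but requires redoing (and carefully justifying) the CLT for a data-dependent weight, whereas the paper's route needs no new limit theorem at all, since the randomness in the coefficients factors out as a converging matrix multiplying a statistic whose joint limit is already established. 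Both are valid; if you keep your route, you should make explicit that the augmented-set moment conditions hold almost surely for $\hat w$ (which follows from your $\hat w\to w^*$ argument), but the cleaner path is the projection-plus-Slutsky observation.
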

\begin{proof}
    By Theorem \ref{thm:adaptive},
    \begin{equation*}
        T_{k,n}^{z_{\hat\theta} \circ P_{\hat \theta}^{-1}} \xrightarrow{\Hnull} \mathcal N\left( 0, \alpha(1-\alpha) H_{z_0\circ P_0^{-1}} \right).
    \end{equation*}
    Observing that $T_{k,n}^{\hat w} = \hat\Pi T_{k,n}^{z_{\hat \theta}\circ P_{\hat \theta}^{-1}}$, where $\hat \Pi = e_1\otimes 1_m - e_2 \otimes \hat H_{0,1}\hat H_1^{-1} \rightarrow \Pi = e_1 \otimes 1_m - e_2 \otimes H_{0,1}H_1^{-1}$ by the continuous mapping theorem, an application of Slutsky's theorem shows that
    \begin{equation*}
        T_{k,n}^{\hat w} = \hat \Pi T_{k,n}^{z_{\hat \theta}\circ P_{\hat \theta}^{-1}} \xrightarrow{\Hnull} \mathcal N\left(    0, \alpha(1-\alpha) \Pi H_{z_0 \circ P_0^{-1}} \Pi^* \right).
    \end{equation*}
    A quick computation verifies that $\Pi H_{z_0\circ P_0^{-1}} \Pi^* = H_0 - H_{0,1}H_1^{-1}H_{1,0}$ as desired. A similar application of Slutsky's theorem and quick computation of $\Pi H_{z_0 \circ P_0^{-1}}$ establishes the convergence result under $\Hone$.
\end{proof}
Of course, the very same principle may be employed even in the absence of explicit models: For any two weights $w_0, w_1$, the effective weight $w = w_0 - H_{w_0,w_1} H_{w_1}^{-1} w_1$ will maximize power in the direction of $w_0$ and orthogonal to $w_1$.
\begin{example*}[Mixture location model continued]
To characterize when differential expression can be locally robustly distinguished from changes in proportion, it suffices by Theorem \ref{thm:signal_noise} to compute the weight functions $w_{\rho}\oplus w_M = z_{\rho,M}$ corresponding to $\rho$ and $M$, respectively, and check whether $w_M$ is not in the span of the components of $w_{\rho}$. Taking $p_{\rho,M}^{\Sigma}(x)$ to be the density of a $\mathcal N(M\rho, W_{(M\odot M)\rho + \Sigma\rho})$ variable, some calculations yield
\begin{align*}
    &w_{\rho} = \sum_{\gamma \in [g]} \frac{M_{\gamma}}{\sigma(\rho, M_{\gamma}, \Sigma_{\gamma})} \Phi^{-1}\circ \Pi_{\gamma} + \frac{M_{\gamma}^2 + \Sigma_{\gamma}^2 - 2\mu(\rho, M_{\gamma})M_{\gamma} }{2\sigma^2(\rho, M_{\gamma}, \Sigma_{\gamma})} \left( \frac{\Phi^{-2}\circ \Pi_{\gamma}}{2} - 1 \right) 
    \\
    &w_{M} = \sum_{\gamma \in [g]} e_{\gamma} \otimes \left[ \frac{\rho}{\sigma(\rho, M_{\gamma}, \Sigma_{\gamma})} \Phi^{-1}\circ \Pi_{\gamma} + \frac{\rho \odot M_{\gamma} - \mu(\rho, M_{\gamma})\rho }{\sigma^2(\rho, M_{\gamma}, \Sigma_{\gamma})} \left( \frac{\Phi^{-2}\circ \Pi_{\gamma}}{2} - 1 \right) \right],
\end{align*}
where $M_{\gamma}, \Sigma_{\gamma}\in \R^c$ are the expectations and variances of gene $\gamma$ across the $c$ cell types, $\mu(\rho, M_{\gamma}), \sigma^2(\rho, M_{\gamma}, \Sigma_{\gamma})$ indicate the expectation and variance of the corresponding $\rho$-mixture, and $\Pi_{\gamma}: \R^g \to \R$ is the $\gamma^{\text{th}}$ coordinate projection. The components of both $w_{\rho}$ and $w_M$ are linear combinations of the $2g$ linearly independent functions $\Phi^{-1}\circ \Pi_1, ..., \Phi^{-1}\circ \Pi_g, \Phi^{-2}\circ \Pi_1, ..., \Phi^{-2}\circ \Pi_g$, and themselves (generically) linearly independent. Therefore, isolating shifts in $M$ is possible as long as $c<2g$; in particular, signal and noise can be locally distinguished even when $g+1<c<2g$, in which case the deconvolution task itself is not identifiable.
\end{example*}
\begin{remark}
Theorem \ref{thm:signal_noise} allows for selectively ignoring local alternatives; discarding global alternatives is, in general, more challenging. Indeed, for most models the linear closure of $\{ z_{\theta_1,\theta_2} : \theta_2 \in \Theta_2 \}$ will have infinite rank or span all of $\mathcal K$.
\end{remark}

\subsection*{Qualitative and multiple testing}

The early design of weight functions for linear rank statistics did not draw from the connections to likelihood functions as emphasized here, but rather attempted to qualitatively anticipate the impact of distinct shapes on the resulting test statistic. For instance, weights that are increasing and skew-symmetric around $x=1/2$ should naturally be sensitive to shifts in location, while convex and symmetric shapes may detect perturbations in scale; \autocite[see, e.g., the discussions in][]{van1953neuer,ansari1960rank}. Many such tests do not naturally correspond to any single model \autocite{mood1954asymptotic, siegel1960nonparametric}, yet have proven useful in various applications. The machinery developed above can typically be incorporated into this qualitative approach to two-sample testing in a straightforward manner. E.g., the same considerations around monotonicity and skew-symmetry, and convexity and evenness for detecting location and scale shifts, respectively, can be applied in order to yield corresponding two-sample tests acting on multivariate samples. Indeed, the choice of $w = J$ for some score function $J:\R^p\to \R^p$ recovers exactly the \textit{rank Hotelling} $T^2$ tests discussed in \autocite{deb2021efficiency}, if $J$ is the inverse transport map from $[0,1]^p$ to an "effective" reference distribution $P$. $P$ with independent components allow univariate reasoning to be transferred directly to the individual components; for example, the increasing and skew-symmetric components of distribution functions associated with laws like the uniform measure on $[0,1]^p$ itself or $\mathcal N(0,1_{p})$ provide suitable generalizations of the Mann-Whitney-$U$ test and its Gaussian-score-transformed version (cf. the Example on page \pageref{ex:gauss_location}); while $w_1(x) = |x - 1/2|, w_2(x) = (x - 1/2)^2$ and $w_3(x) = \Phi^{-2}(x)$ (all applied component-wise) naturally extend the tests of Siegel-Tukey, Mood, and Klotz to the multivariate task of sensing scale shifts. Theorem \ref{thm:conditional_map} indicates that power properties of such generalized tests should behave similarly to the those of the univariate tests they generalize, at least for large classes of null distributions.
\begin{example*}[Relative efficiencies of multivariate scale tests]
Under local alternatives $p_{\theta}(x) = \theta^{-1}p_0(\theta^{-1}x)$ around $\theta=1$ ($x\in \R^p$), the tests of Siegel-Tukey, Mood, and Klotz behave asymptotically as $\| Z + \sqrt{\alpha(1-\alpha)} \beta_j \|_2^2$ (cf. Remark \ref{rmk:chi_square}), where $Z$  is a univariate standard Gaussian variable, and
\begin{align*}
    \beta_1 &= -\sqrt{\frac{48}{p}} \int_{[0,1]^p} \left(\| x-1/2 \|_1 - \frac{p}{4} \right) \left\langle P_0^{-1}(x), z_0\circ P_0^{-1}(x) \right\rangle 
    \\
    \beta_2 &= -\sqrt{\frac{180}{p}} \int_{[0,1]^p} \left(\| x-1/2 \|_2^2 - \frac{p}{12} \right) \left\langle P_0^{-1}(x), z_0\circ P_0^{-1}(x) \right\rangle
    \\
    \beta_3 &= -\sqrt{\frac{2}{p}} \int_{[0,1]^p} \left(\| \Phi^{-1}(x) \|_2^2 - p \right) \left\langle P_0^{-1}(x), z_0\circ P_0^{-1}(x) \right\rangle,
\end{align*}
with the usual interpretation of $P_0$ as a transport map pushing $p_0$ to $\mathrm{Uniform}([0,1]^p)$. From the form of the $\beta_j$, it is clear that the multivariate relative efficiencies between $w_1, w_2$ and $w_3$ reduce to their univariate analogues \autocite[reviewed, and partially worked out, in][]{klotz1962nonparametric} when, e.g., $P_0$ has independent components. Moreover, if $P_0$ is elliptically symmetric (that is, $p_0(x) = Z^{-1} \exp\circ g(\| \Sigma^{-1/2}x\|_2^2)$ for some sufficiently regular function $g$, covariance matrix $\Sigma$, and normalization constant $Z$), the same conclusion holds; for in such case,
\begin{equation*}
    \frac{1}{2}\left\langle P_0^{-1}(x), z_0\circ P_0^{-1}(x) \right\rangle = g'(\|\Sigma^{-1/2}P_0^{-1}(x)\|_2^2) \left\| \Sigma^{-1/2}P_0^{-1}(x) \right\|_2^2,
\end{equation*}
and, writing $f(x) = g'(\|\Sigma^{-1/2}P_0^{-1}(x)\|_2^2)\| \Sigma^{-1/2} P_0^{-1}(x) \|_2^2$,
\begin{equation*}
    \int \sum_{j=1}^p (c\circ \Pi_j) f = p \int (c\circ \Pi_1) f
\end{equation*}
for $j\in \{1,2,3\}$ and any function $c$ due to the spherical symmetry of $f\left(\mathrm{Uniform}([0,1]^p)\right)$.
\end{example*}
In addition to lacking a precise generative model and therefore necessitating the design of more qualitative weights as discussed above, many applications exhibit multiple plausible directions in which a system may shift under the alternative. E.g., in the differential expression task considered in page \pageref{ex:differential_expression}'s Example, it typically is not clear a priori whether a certain gene's abundance should be expected to vary in location across distinct conditions, or, say, spread \autocite[which has been shown to assume important roles in modulating gene networks; see, e.g.,][and references therein]{eldar2010functional}. Testing for each such direction separately and correcting the resulting $p$-values for their multiple testing burden risks under-powering of the corresponding hypothesis test. As long as a suitable weight can be formulated for each query of interest, Theorem \ref{thm:nonparametric_score} allows for effective joint testing.
\begin{example*}[Testing for location and scale shifts simultaneously]
For samples $\X,\Y \in \R^p$, pick $w_{\mu} = \Phi^{-1}, w_{\sigma} = \Phi^{-2}$, and form the combined weight $w(x) = (w_{\mu}, w_{\sigma})\in \R^{2p}$. Under $\Hnull$, the resulting $T_{k,n}^w$ statistic is asymptotically distributed as $\mathcal N(0,\alpha(1-\alpha) H)$, where $H = p (e_1+2e_2)\otimes 1_{p})$, and so can be used to perform hypothesis testing with. Under $p_{\theta/\sqrt{N}}(x) = | W_{\sigma^{-1}_N} | p_0(W_{\sigma^{-1}_N} x - \mu_N)$, where by slight abuse of notation $\mu_N = \mu/\sqrt{N}, \sigma_N^{-1} = (\sigma_1^{-1}, ..., \sigma_p^{-1})/\sqrt{N}$ for some $\mu, \sigma\in \R^p$, and $W_a$ is the diagonal matrix featuring $a$ on its diagonal as before, $\| \sqrt{\alpha(1-\alpha)}H^{-1/2} T_{k,n}^w \|_2^2$ is asymptotically distributed like 
\begin{equation}
    \left\| Z_{2p} + \sqrt{\alpha (1-\alpha)} \left[\int (\Phi^{-1}, \Phi^{-2}) \otimes \left(z_0\circ P_0^{-1}(x), W_x [z_0\circ P_0^{-1}(x)] \right) \right]^{1/2} \left( e_1 \otimes \mu + e_2 \otimes \sigma \right) \right\|_2^2.
    \label{eq:multiple_testing_comparison}
\end{equation}
Assuming for simplicity that $p_0$ is the density of a standard normal variable in $\R^{p}$, \eqref{eq:multiple_testing_comparison} simplifies to 
\begin{equation*}
    \left\| Z_{2p} + \sqrt{p \alpha(1-\alpha)}(e_1 \otimes \mu + \sqrt{2}e_2 \otimes \sigma) \right\|_2^2.
\end{equation*}
The goal is to compare the power resulting from such expression to that of, say, a Bonferroni correction applied to $T_{k,n}^{w_{\mu}}$ and $T_{k,n}^{w_{\sigma}}$ separately. This is easiest computed in the large-$p$ regime, for then
\begin{multline}
    \P\left[\frac{p^{-1/2}}{2}\left(\left\| Z_{2p} + \sqrt{p \alpha(1-\alpha)}(e_1 \otimes \mu + \sqrt{2}e_2 \otimes \sigma) \right\|_2^2 - 2p\right) > \frac{p^{-1/2}}{2}\left(\chi^2_{2p,1-\alpha}-2p\right)\right] 
    \\
    \approx 1 - \Phi(z_{1-\alpha} - \| \tilde{\mu} \|_2^2 - 2 \| \tilde{\sigma} \|_2^2),
    \label{eq:multiple_testing_joint}
\end{multline}
where $\chi^2_{2p,1-\alpha}$ and $z_{1-\alpha}$ are the $(1-\alpha)^{\text{th}}$ quantiles of a $\chi^2_{2p}$ and a standard Gaussian variable, respectively, and 
\[(\| \tilde{\mu} \|_2^2, \| \tilde{\sigma} \|_2^2) = \left[2\alpha(1-\alpha)\right]^{-1} \lim_{p\to\infty} \sqrt{p} (\| \mu \|_2^2, \| \sigma \|_2^2). \]

\begin{figure}[h!]
    \centering
    \includegraphics[width=\textwidth]{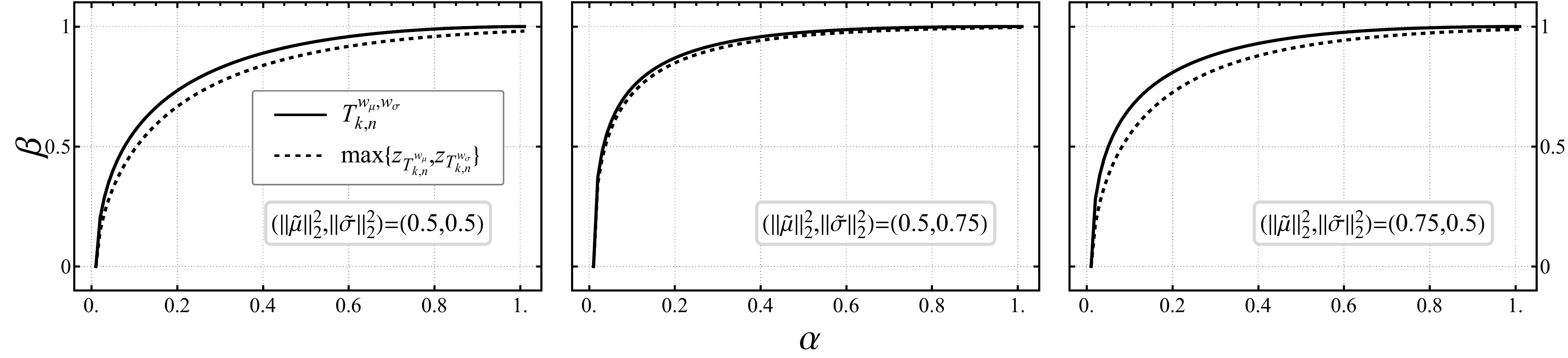}
    \caption{Comparison of simultaneously testing for location and scale shifts either jointly ($T_{k,n}^{w_{\mu},w_{\sigma}}$, cf. \eqref{eq:multiple_testing_joint}) or by correcting for multiple testing ($\max\{ z_{T_{k,n}^{w_{\mu}}}, z_{T_{k,n}^{w_{\sigma}}} \}$, cf. \eqref{eq:multiple_testing_correction}).}
    \label{fig:multiple_testing_comparison_figure}
\end{figure}

Writing
\begin{align*}
    E_1 &= \left\{ (2p)^{-1/2}\left(\left\| Z_{p} + \sqrt{p \alpha(1-\alpha)}\mu \right\|_2^2 - p\right) > (2p)^{-1/2}\left(\chi^2_{p,1-\alpha/2}-p\right) \right\}
    \\
    E_2 &= \left\{ (2p)^{-1/2}\left(\left\| Z_{p} + \sqrt{2p \alpha(1-\alpha)}\sigma \right\|_2^2 - p\right) > (2p)^{-1/2}\left(\chi^2_{p,1-\alpha/2}-p\right) \right\}
\end{align*}
The corresponding quantity after correcting for multiple testing of $T_{k,n}^{w_{\mu}}$ and $T_{k,n}^{w_{\sigma}}$ reads
\begin{equation}
    \P\left[E_1 \cup E_2\right] 
    \approx 1 - \Phi(z_{1-\alpha/2} - \sqrt{2} \| \tilde{\mu} \|_2^2)\cdot \Phi(z_{1-\alpha/2} - 2\sqrt{2} \| \tilde{\sigma} \|_2^2).
    \label{eq:multiple_testing_correction}
\end{equation}
For fixed choices of $\| \tilde{\mu} \|_2^2, \| \tilde{\sigma} \|_2^2$, \eqref{eq:multiple_testing_joint} and \eqref{eq:multiple_testing_correction} can be compared as functions of the nominal size $\alpha$. Figure \ref{fig:multiple_testing_comparison_figure} does so for three representative pairs $(\| \tilde{\mu} \|_2^2, \| \tilde{\sigma} \|_2^2) \in \{ (0.5,0.5), (0.5,0.75), (0.75,0.5) \}$, illustrating a noticeable gain in power.
\end{example*}

\subsection*{General $K$-sample testing}

The power of tests based on $T_{k,n}^w$ is invariant under translations and scaling of $w$ (see discussion after Theorem \ref{thm:nonparametric_score}), and so one may without loss of generality assume that $\sum_{j=1}^N w(N^{-1}j) = 0$. Under this convention, $(1-\alpha_{k,n})\int w\circ H_{k,n} \ \mathrm{d}G_n = -\alpha_{k,n} \int w\circ H_{k,n} \ \mathrm{d}F_k$, and so the roles of $\X$ and $\Y$ are entirely symmetric. That is, even though previous results were stated in terms of distributional perturbations of $\Y$, they equally apply to those of $\X$. This ceases to remain true once more than two samples are considered, in which case a more general strategy is to be employed. More concretely, given $K$ samples $\mathcal X_1, ..., \mathcal X_K$ of sizes $n = (n_1, ..., n_K)$, the $K$-sample testing task consists of deciding if \textit{all} samples were generated from identical distributions, and has found applications that rival those of two-sample tests in breadth \autocite[see, e.g., discussion in][]{scholz1987k, thas2010comparing}. If an underlying model $\mathcal P$ from which the $\mathcal X_j$ are generated independently is known, then
\begin{equation*}
    \lambda_{\mathcal P}\left( \mathcal X_1, ..., \mathcal X_K \right) = \frac{\max_{\theta_1, ..., \theta_K \in \Theta} \prod_{j=1}^K p_{\theta_j}(\mathcal X_j)}{\max_{\theta \in \Theta} \prod_{j=1}^K p_{\theta}(\mathcal X_j)}
\end{equation*}
is a natural extension of \eqref{eq:likelihood_ratio}, and like \eqref{eq:likelihood_ratio} features an associated Wilks phenomenon: under the null hypothesis of $\mathcal X_1, ..., \mathcal X_K \stackrel{\text{iid}}{\sim} p_0$, $-2 \log \lambda_{\mathcal P}(\mathcal X_1, ..., \mathcal X_K)$ behaves asymptotically like a $\chi^2_{(K-1)m}$ variable. If the null distribution is known to be uniform, then under local perturbations of $(\mathcal X_1, ..., \mathcal X_K)$ by $\theta/\sqrt{N} = (\theta_1, ..., \theta_K)/\sqrt{N} \in \R^{mK}$ (with $N = n_1 + ... + n_K$) a first-order expansion of $-2 \log \lambda_{\mathcal P}(\mathcal X_1, ..., \mathcal X_K)$ reads
\begin{equation*}
    -2\log \lambda_{\mathcal P}(\mathcal X_1, ..., \mathcal X_K) \approx \max_{\theta_1, ..., \theta_K} \frac{1}{\sqrt{N}} \sum_{k=1}^K \left\langle z_0(\mathcal X_k), \theta_k \right\rangle = \max_{\theta_1, ..., \theta_K} \frac{1}{\sqrt{N}} \left(\sum_{k=1}^K e_k \otimes z_0(\mathcal X_k) \right)\theta,
\end{equation*}
motivating the following $K$-sample version $\tilde T_{n}^w$ of $T_{k,n}^w$:
\begin{equation*}
    \tilde T_n^w = \sqrt{N}\sum_{k=1}^K e_k \otimes \left( \alpha_k \int w \circ H_n \ \mathrm{d} F_k \right) = \frac{1}{\sqrt{N}} \sum_{k=1}^K e_k \otimes w\left( \mathcal X_k \right) \in \R^{mK},
\end{equation*}
where $H_n$ is a transport map of $\mathcal Z_n = \mathcal X_1 \cup ... \cup \mathcal X_K$ to $\mathcal C^p_N$, $\alpha_k = n_k/N$, and $F_k$ the corresponding map for $\mathcal X_k$. Given the dependence of $\tilde T_n^w$'s components ($\tilde T_n^w \in \ker{ \mathbbm{1}_K \otimes 1_m } = \Theta_0$), it is more convenient to work with a transformed version $T_n^w = \Gamma \tilde T_n^w \in \R^{(m-1)K}$, where $\Gamma = B^* \lambda_{1/\sqrt{\alpha}} \otimes 1_m: \left( \R^K \otimes \R^m \right) \to \left( \R^{K-1} \otimes \R^m \right)$, and $B$ an orthonormal basis for $\Theta_1 = \Theta_0^{\perp}$. As is the case with $T_{k,n}^w$, the large-$N$ behavior of $T_n^w$ can be worked out.
\begin{theorem}
For $\min_k n_k\to\infty, n_k/N\to\alpha_k>0$, and the usual conditions on $w\in \R^m$ and $\mathcal P$, the statistic $\sqrt{N} T_{n}^w$ is asymptotically normal under the null and local (contiguous) alternatives with
\begin{align*}
    &T_{n}^w \stackrel{\Hnull}{\longrightarrow} \mathcal N\left(0, B^* A A^* B \otimes H_w\right)
    &T_{n}^w \stackrel{\Hone}{\longrightarrow} \mathcal N\left(B^*AA^* \lambda_{\sqrt{\alpha}} \otimes H_{w,z_0\circ P_0^{-1}}\theta, B^* A A^* B\otimes H_w\right),
\end{align*}
where $A$ is an orthonormal basis for $\ker \sqrt{\alpha} \otimes \sqrt{\alpha}$, with $\sqrt{\cdot}$ applied component-wise.
\label{thm:k_samples}
\end{theorem}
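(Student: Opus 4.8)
The plan is to reproduce the two-stage template of Theorem~\ref{thm:nonparametric_score}: a finite-population central limit theorem under $\Hnull$, then LeCam's third lemma under $\Hone$, with all of the $K$-group bookkeeping absorbed into the matrices $A$, $B$, and $\Gamma$. The passage to genuinely multivariate sample spaces is inherited verbatim from Theorem~\ref{thm:multivariate_sample_space}, so I would present the rank-labelled case as representative. First I would condition on the pooled sample $\mathcal Z_n$: under $\Hnull$ the membership vector $(L_1,\dots,L_N)\in\{1,\dots,K\}^N$ recording to which sample each pooled point belongs is a uniformly random arrangement of $n_1$ ones, $\dots$, $n_K$ $K$'s. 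Writing $\xi_i=e_{L_i}\in\R^K$ and $a_i=w\circ H_n(Z_i)$ (so $\{a_i\}=\{w(c):c\in\mathcal C_N^p\}$), one has $\tilde T_n^w=N^{-1/2}\sum_i\xi_i\otimes a_i$, and any scalar projection $\langle v,\tilde T_n^w\rangle=N^{-1/2}\sum_i\langle v_{L_i},a_i\rangle$ becomes, on writing $L_i=L^0_{\pi(i)}$ for a fixed labelling $L^0$ and a uniform permutation $\pi$, a combinatorial-CLT statistic $N^{-1/2}\sum_i c_{i,\pi(i)}$. Hoeffding's combinatorial central limit theorem --- equivalently the simple-random-sampling theorem used in Theorem~\ref{thm:nonparametric_score} --- then gives joint asymptotic normality via the Cram\'er--Wold device, once the negligibility condition $\max_i\|a_i\|_2^2/N\to0$ is verified from the continuity and square-integrability of $w$ exactly as before.

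The covariance is the real content, and it is where $A$ and $B$ enter. The finite-population identities $\operatorname{Cov}(\xi_i)=\operatorname{diag}(\alpha)-\alpha\otimes\alpha=:V$ and $\operatorname{Cov}(\xi_i,\xi_j)=-V/(N-1)$ for $i\neq j$, together with $N^{-1}\sum_i a_i\otimes a_i\to H_w$ and $N^{-1}\sum_i a_i\to\int w=0$, collapse the double sum to $\operatorname{Cov}(\tilde T_n^w)\to V\otimes H_w$. Conjugating by $\Gamma=B^*\lambda_{1/\sqrt\alpha}\otimes 1_m$ then reduces the claimed null covariance to the purely algebraic identity $\lambda_{1/\sqrt\alpha}\,V\,\lambda_{1/\sqrt\alpha}=1_K-\sqrt\alpha\otimes\sqrt\alpha=AA^*$, the last equality because $\sqrt\alpha$ is a unit vector ($\sum_k\alpha_k=1$) and $A$ is an orthonormal basis of $\{\sqrt\alpha\}^\perp=\ker(\sqrt\alpha\otimes\sqrt\alpha)$; this yields $\operatorname{Cov}(T_n^w)\to B^*AA^*B\otimes H_w$.

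For $\Hone$ I would augment the value set as in Theorem~\ref{thm:nonparametric_score}, appending to each pooled point the linear part $N^{-1/2}\langle\xi_i\otimes z_0\circ H_n^{-1}(H_n(Z_i)),\theta\rangle$ of the local log-likelihood ratio $\rho_n=\sum_{k=1}^K\sum_{X\in\mathcal X_k}\log\{p_{\theta_k/\sqrt N}(X)/p_0(X)\}$, whose quadratic term is asymptotically deterministic. Then $(T_n^w,\rho_n)$ is jointly a simple-random-sample statistic up to $o_{\Hnull}(1)$ and a fixed shift, hence jointly normal under $\Hnull$. The same bookkeeping, now pairing $a_i$ with $c_i=z_0\circ H_n^{-1}(H_n(Z_i))$ and using $N^{-1}\sum_i a_i\otimes c_i\to H_{w,z_0\circ P_0^{-1}}$, gives $\operatorname{Cov}(\tilde T_n^w,\rho_n)\to(V\otimes H_{w,z_0\circ P_0^{-1}})\theta$; LeCam's third lemma identifies this limiting covariance with the mean shift of $T_n^w$ under $\Hone$, and the companion identity $B^*\lambda_{1/\sqrt\alpha}V=B^*(1_K-\sqrt\alpha\otimes\sqrt\alpha)\lambda_{\sqrt\alpha}=B^*AA^*\lambda_{\sqrt\alpha}$ rewrites it as $B^*AA^*\lambda_{\sqrt\alpha}\otimes H_{w,z_0\circ P_0^{-1}}\theta$, as claimed.

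The \emph{main obstacle} will not be the limit theorem --- the reduction to the combinatorial / simple-random-sampling CLT is routine given Theorem~\ref{thm:nonparametric_score} --- but keeping the two tensor slots straight while verifying the matrix identities, and in particular confirming that the two-sided $\lambda_{1/\sqrt\alpha}$-conjugation collapses the rank-deficient $V$ exactly onto the projection $AA^*$ onto $\{\sqrt\alpha\}^\perp$, while the one-sided conjugation produces $AA^*\lambda_{\sqrt\alpha}$ for the mean. A secondary point to pin down is that $\Gamma$ is precisely the map rendering $T_n^w$ nondegenerate on the $(K-1)m$-dimensional space $\Theta_1$: since $(\mathbbm{1}_K^*\otimes 1_m)\tilde T_n^w=N^{-1/2}\sum_i a_i\to0$, the raw statistic $\tilde T_n^w$ concentrates on $\Theta_0=\ker(\mathbbm{1}_K\otimes 1_m)$ and $V$ is singular along $\sqrt\alpha$, so one must check that $B$ together with the factor $\lambda_{1/\sqrt\alpha}$ removes this degeneracy without distorting the quadratic form.
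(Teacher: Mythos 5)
Your proposal is correct and follows essentially the same route as the paper: a finite-population central limit theorem for the pooled ranks under $\Hnull$, LeCam's third lemma for the contiguous shift, and the same tensor algebra collapsing $V=\lambda_\alpha-\alpha\otimes\alpha$ to $AA^*$ (and $\lambda_{1/\sqrt\alpha}V$ to $AA^*\lambda_{\sqrt\alpha}$) after conjugation by $\Gamma$, with the multivariate sample space handled by deferring to Theorem~\ref{thm:multivariate_sample_space}. The only substantive difference is that you establish joint null normality in a single stroke via Hoeffding's combinatorial CLT applied to the membership labels $\xi_i=e_{L_i}$ together with the Cram\'er--Wold device, whereas the paper peels off the $K$ blocks sequentially, treating $\tilde T_n^w(k+1)$ as a sample without replacement from the residual pool conditional on the earlier blocks --- your packaging is arguably tidier and avoids the conditional-CLT bookkeeping, but both rest on the same underlying sampling-without-replacement limit theorem and yield identical covariances.
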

\begin{proof}
    To simplify notation, the proof only treats the case $p=1$, from which $p>1$ follows in a straightforward manner by reasoning as in Theorem \ref{thm:multivariate_sample_space}. As usual, the $\Hnull$ result is a consequence of the central limit theorem for sampling without replacement. More concretely, write $T_n^w(k) = (e_k \otimes 1_m) \tilde T_{n}^w$ and set $\mathcal A_N(k+1) = \{ w(jN^{-1}), j=1,..., N\} \setminus \cup_{m=1}^k \mathcal B_m$, where $\mathcal B_m = \{ w\circ H_n(X_{m,1}), ..., w\circ H_n(X_{m,n_m})\}$. Then $\tilde T_n^w(1)$ converges to a centered normal variable by Theorem \ref{thm:nonparametric_score}, and
    \begin{equation*}
        \tilde T_n^w(k+1) \ \bigg| \ \sum_{j=1}^k \tilde T_n^w(j) = s
    \end{equation*}
    is a sample without replacement of size $n_{k+1}$ from
    \begin{equation*}
        \mathcal A_N(k+1) \ \bigg| \ N^{-1/2}\sum_{b\in\cup_{j=1}^k\mathcal B_j} b = s
    \end{equation*}
    and thus is amenable to a central limit theorem too. Therefore, $(\tilde T_n^w(1), ..., \tilde T_n^w(K-1))$ is jointly normal as $N$ grows large, implying that
    \begin{equation*}
        \tilde T_n^w
        = 
        \begin{pmatrix}
            \tilde T_n^w(1) \\ \vdots \\ \tilde T_n^w(K-1) \\ \tilde T_n^w(K)
        \end{pmatrix}
        =
        \begin{pmatrix}
            \tilde T_n^w(1) \\ \vdots \\ \tilde T_n^w(K-1) \\ 1 - \sum_{k=1}^{K-1} \tilde T_n^w(k)
        \end{pmatrix}
        \xrightarrow{\Hnull} \mathcal N\left( 0, \tilde \Sigma \right)
    \end{equation*}
    with $\ker \tilde \Sigma = \Theta_0$. Consequently, $T_n^w \to \mathcal N(0, \Sigma)$, where $\Sigma$ is given by
    \begin{multline*}
        \Var T_n^w = \E \sum_{k=1}^K \frac{1}{n_k} (B^*e_k \otimes B^* e_k) \otimes (w\circ H_{k,n}(\mathcal X_k) \otimes w \circ H_{k,n}(\mathcal X_k)) 
        \\
        + \E \sum_{m\neq k} \frac{1}{\sqrt{n_m n_k}} (B^*e_m \otimes B^* e_k) \otimes (w\circ H_{k,n}(\mathcal X_m) \otimes w\circ H_{k,n}(\mathcal X_k)) = A_1 + A_2.
    \end{multline*}
    Computations similar to those used in the proof of Theorem \ref{thm:nonparametric_score} yield
    \begin{align*}
        &A_1 = \sum_{k=1}^K (1-\alpha_k) (B^* e_k \otimes B^* e_k) \otimes H_w
        &A_2 = \sum_{m\neq k} \sqrt{\alpha_m \alpha_k} (B^* e_m \otimes B^* e_k) \otimes H_w,
    \end{align*}
    and thus
    \begin{equation*}
        \Var T_n^w = B^*(1_K - \sqrt{\alpha} \otimes \sqrt{\alpha}) B \otimes H_w = B^* A A^* B \otimes H_w,
    \end{equation*}
    as advertised.
    \\[2mm]
    Establishing the $\Hone$ convergence can be performed along similar lines as before: augment $\mathcal A_N(k+1)$ to
    \begin{equation*}
        \mathcal A_N^+(k+1) = \{ w(jN^{-1}) + e_{m+1} \otimes z_0 \circ H_n^{-1} (jN^{-1}) \} \setminus \cup_{m=1}^k \mathcal B_m^+
    \end{equation*}
    where
    \begin{equation*}
        \mathcal B_j^+ = \{ w\circ H_n(X_{j,1}) + e_{m+1} \otimes z_0(X_{j,1}), ..., w\circ H_n(X_{j,n_j}) + e_{m+1} \otimes z_0 (X_{j,n_j}) \},
    \end{equation*}
    then the same argument showing the joint normality of $(\tilde T_n^w(1), ..., \tilde T_n^w(K))$ can be used to arrive at the joint normality of $(\tilde T_n^w(1), ..., \tilde T_n^w(K), z_0(\mathcal X_1), ..., z_0(\mathcal X_K))$, and therefore the joint normality of $(T_n^w, \rho_n)$, where
    \begin{equation*}
        \rho_n = \sum_{k=1}^K \sum_{j=1}^{n_k} \log \frac{p_{\theta_k/\sqrt{N}}(X_{k,j})}{ p_0(X_{k,j}) } = \sum_{k=1}^K \sqrt{\alpha_k} \left\langle z_{0}(\mathcal X_k), \theta_k \right\rangle - \frac{1}{2} \left\langle \lambda_{\alpha} \otimes H_{z_0 \circ P_0^{-1}} \theta, \theta \right\rangle + o_{\Hnull}(1)
    \end{equation*}
    is the relevant likelihood ratio statistic. The usual appeal to LeCam's third lemma together with the observation that
    \begin{equation*}
        \E T_n^w \rho_n = \sum_{k=1}^K \frac{\sqrt{\alpha_k}}{n_k} B^* e_k \otimes \E\left[ w\circ H_n(\mathcal X_k) \otimes z_0(\mathcal X_k) \right]\theta_k + \sum_{m\neq k} \frac{\sqrt{\alpha_m}}{\sqrt{n_m n_k}} B^* e_k \otimes \E \left[ w\circ H_n (\mathcal X_k) \otimes z_0(\mathcal X_m) \right]\theta_m
    \end{equation*}
    converges to
    \begin{multline*}
        \sum_{k=1}^K \sqrt{\alpha_k}(1-\alpha_k) B^*e_k \otimes H_{w,z_0 \circ P_0^{-1}}\theta_k - \sum_{m\neq k} \alpha_m \sqrt{\alpha_k} B^* e_k \otimes H_{w,z_0\circ P_0^{-1}} \theta_j 
        \\ 
        = \left[ B^*\left( \lambda_{\sqrt{\alpha}}) - \sqrt{\alpha} \otimes \alpha  \right) \otimes H_{w,z_0\circ P_0^{-1}} \right]\theta = B^* AA^* \lambda_{\sqrt{\alpha}} \otimes H_{w,z_0\circ P_0^{-1}}
    \end{multline*}
    completes the proof.
\end{proof}
As before, the Pitman efficiency of $T_n^w$ compared to $\lambda_{\mathcal P}(\mathcal X_1, ..., \mathcal X_K)$ is favorable.
\begin{corollary}
The Pitman efficiency of $T_{n}^{z_{0}\circ P_0^{-1}}$ relative to a test based on $\lambda_{\mathcal P}(\mathcal X_1, ..., \mathcal X_K)$ is $1$ for local alternatives $p_{\theta/\sqrt{N}}\in\mathcal P$ converging to $p_{0}$.
\label{cor:k_sample_efficiency}
\end{corollary}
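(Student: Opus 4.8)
The plan is to mirror the proof of Corollary \ref{cor:pitman_likelihood_ratio}: both $-2\log\lambda_{\mathcal P}(\mathcal X_1,\dots,\mathcal X_K)$ and the quadratic functional of $T_n^w$ (the $M$-type statistic of Remark \ref{rmk:chi_square}) converge, under $p_{\theta/\sqrt N}$, to non-central $\chi^2_{(K-1)m}$ variables, so — the degrees of freedom already agreeing — it suffices to show the two non-centrality parameters coincide. Write $H = H_{z_0\circ P_0^{-1}}$ for the (full-rank) Fisher information and take $w = z_0\circ P_0^{-1}$, so that $H_w = H_{w,z_0\circ P_0^{-1}} = H$. Theorem \ref{thm:k_samples} then gives the limiting law of $T_n^w$ with mean $m_\theta = (B^*AA^*\lambda_{\sqrt\alpha}\otimes H)\theta$ and covariance $\Sigma = B^*AA^*B\otimes H$, whence whitening produces a limiting non-central $\chi^2_{(K-1)m}$ with non-centrality $\delta_T^2 = m_\theta^*\Sigma^{-1}m_\theta$.

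First I would reduce $\delta_T^2$ to an intrinsic quadratic form independent of the (arbitrary) basis $B$. Using $\Sigma^{-1} = (B^*AA^*B)^{-1}\otimes H^{-1}$ one obtains $\delta_T^2 = \theta^*(\lambda_{\sqrt\alpha}P\lambda_{\sqrt\alpha}\otimes H)\theta$ with $P = AA^*B(B^*AA^*B)^{-1}B^*AA^*$. The crux is the linear-algebra identity $P = AA^*$: since $AA^* = 1_K - \sqrt\alpha\otimes\sqrt\alpha$ is the orthogonal projection onto $\sqrt\alpha^\perp = \operatorname{colspace}A$, and $\sqrt\alpha\notin\operatorname{colspace}B$ (which holds for either natural choice of $B$, as $\langle\sqrt\alpha,\mathbbm 1_K\rangle\neq 0$), the matrix $AA^*B$ has full column rank $K-1$ with column space exactly $\sqrt\alpha^\perp$, so $P$ is the projection onto $\sqrt\alpha^\perp$, i.e.\ $P = AA^*$. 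Therefore $\delta_T^2 = \theta^*(\lambda_{\sqrt\alpha}AA^*\lambda_{\sqrt\alpha}\otimes H)\theta$, and expanding $\lambda_{\sqrt\alpha}(1_K-\sqrt\alpha\otimes\sqrt\alpha)\lambda_{\sqrt\alpha} = \operatorname{diag}(\alpha_k) - \alpha\otimes\alpha$ rewrites this in the suggestive form $\sum_k\alpha_k\langle\theta_k - \bar\theta,\,H(\theta_k-\bar\theta)\rangle$ with $\bar\theta = \sum_j\alpha_j\theta_j$.

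Next I would compute the likelihood-ratio non-centrality $\delta_{LR}^2$. Wilks' phenomenon for the composite $K$-sample null, together with the standing efficiency assumption on the maximum-likelihood estimates, identifies $-2\log\lambda_{\mathcal P}$ asymptotically with the one-way analysis-of-variance statistic $\sum_k n_k(\hat\theta_k - \bar{\hat\theta})^*H(\hat\theta_k - \bar{\hat\theta})$ comparing the free per-sample estimates $\hat\theta_k$ against their pooled estimate $\bar{\hat\theta} = N^{-1}\sum_k n_k\hat\theta_k$; equivalently, it is the squared norm of the pooled efficient score after projecting out the common-parameter direction $\operatorname{span}\sqrt\alpha$, exactly the $K$-sample analogue of the $\Pi_{(\sqrt\alpha,\sqrt{1-\alpha})}^\perp$ projection appearing in Corollary \ref{cor:pitman_likelihood_ratio}. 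Evaluating its non-centrality at the local means $\theta_k/\sqrt N$ yields $\delta_{LR}^2 = \sum_k\alpha_k\langle\theta_k-\bar\theta,\,H(\theta_k-\bar\theta)\rangle$, which is precisely the $\delta_T^2$ obtained above. With matching degrees of freedom and matching non-centralities the two limiting power functions coincide, and the Pitman efficiency equals $1$ by the characterization invoked in Corollary \ref{cor:pitman_likelihood_ratio}.

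I expect the main obstacle to be the basis-invariance step $P = AA^*$ in tandem with pinning down the $K$-sample likelihood-ratio limit as the $\lambda_{\sqrt\alpha}$-weighted projection of the pooled score onto $\sqrt\alpha^\perp$; once both non-centralities are cast in the intrinsic form $\sum_k\alpha_k\langle\theta_k-\bar\theta,\,H(\theta_k-\bar\theta)\rangle$ their equality is immediate. A secondary point requiring care is that the normalizing constants implicit in each $\chi^2$ calibration agree, but since both statistics are by construction centred to a central $\chi^2_{(K-1)m}$ under $\Hnull$, this matching is automatic.
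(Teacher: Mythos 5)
Your proposal is correct and follows essentially the same route as the paper: both arguments reduce the claim to matching the non-centrality parameters of the two limiting non-central $\chi^2_{(K-1)m}$ laws, taking the $T_n^w$ side from Theorem \ref{thm:k_samples} and the likelihood-ratio side from the standing efficiency assumptions, exactly as in Corollary \ref{cor:pitman_likelihood_ratio}. The only difference is presentational — the paper writes the common non-centrality as $\| AA^*\lambda_{\sqrt{\alpha}}\otimes H^{1/2}\theta\|_2^2$ via a projection onto $(\lambda_{\sqrt{\alpha}}\otimes H^{1/2}\Theta_0)^{\perp}$, while you whiten explicitly, verify the basis-invariance $AA^*B(B^*AA^*B)^{-1}B^*AA^* = AA^*$ (a step the paper leaves implicit), and recast both sides in the equivalent analysis-of-variance form $\sum_k\alpha_k\langle\theta_k-\bar\theta, H(\theta_k-\bar\theta)\rangle$.
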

\begin{proof}
    The usual regularity assumptions on $\mathcal P$ imply that
    \begin{equation}
        -2 \log \lambda_{\mathcal P}(\mathcal X_1, ..., \mathcal X_K) \xrightarrow{\Hone} \left\| Z_{(m-1)K} + \Pi_{\lambda_{\sqrt{\alpha}} \otimes H_{z_0\circ P_0^{-1}} \Theta_0} \lambda_{\sqrt{\alpha}} \otimes H_{z_0\circ P_0^{-1}}^{1/2} \theta \right\|_2^2,
        \label{eq:k_sample_ratio_shift}
    \end{equation}
    where $\Pi_{A}: \Theta^K \to A$ for a linear subspace $A \subset \Theta^K$ is the orthogonal projection onto $A$. Here, $\Theta_0$ is the diagonal of $\Theta^K$, and so
    \begin{equation*}
        \lambda_{\sqrt{\alpha}}\otimes H_{z_0\circ P_0^{-1}}^{1/2}\Theta_0 = \ker \left( 1_K - \sqrt{\alpha} \otimes \sqrt{\alpha} \right) \otimes 1_m,
    \end{equation*}
    giving for the non-centrality parameter in \eqref{eq:k_sample_ratio_shift} $AA^*\lambda_{\sqrt{\alpha}} \otimes H_{z_0\circ P_0^{-1}}^{1/2} \theta$ as desired.
\end{proof}
Likewise, results mirroring those contained in Theorems \ref{thm:adaptive}-\ref{thm:consistency} can be obtained in this general $K$-sample setting. In particular, Corollary \ref{cor:k_sample_efficiency} remains true if $w = z_{\hat \theta}$ is chosen adaptively through maximum-likelihood estimation based on $\mathcal Z_n$.

\subsection*{Exact finite sample distributions}
\label{sec:exact_finite_sample}

Although typical Berry-Esseen rates of $O(N^{-1/2})$ are known for linear rank statistics under mild regularity conditions on $w$ \autocite{friedrich1989berry}, modern scientific applications often must accommodate to very small sample sizes, where asymptotic results may not be of relevance yet \autocite[see, e.g.,][]{mollan2020precise}. Empirical simulation studies have shown that linear rank statistics continue to perform favorably in such regime \autocite{conover1981comparative, conover2018update}, and so characterizing the finite sample distribution of $T_{k,n}^w$ is therefore desirable both theoretically as well as in practice. While an explicit description appears to be currently out of reach, \autocite{erdmann2022generalized} recently showed that such characterization is available for a closely related, asymptotically equivalent, statistic $S_{k,n}^w$ based on rank-spacings (i.e., $r_{j+1}-r_j$) instead of ranks, and argued that its small-sample power properties ought to be comparable to those of $T_{k,n}^w$ for most purposes. More concretely, $S_{k,n}^w$ is computed as
\begin{equation*}
    S_{k,n}^w = \frac{1}{n} \sum_{j=0}^n w\left( \frac{j}{n} \right)(r_{j+1}-r_j) = w(1) -\int H_N(x) \left( w \circ G_n(x) - w\circ (G_n - n^{-1})(x) \right)  \ \mathrm{d}G_n,
\end{equation*}
with the convention $r_{0} = 0, r_{n+1}=n+1$, and where $\X, \Y$ and $w$ are all assumed univariate \autocite[][provides a slightly different definition, though the two versions differ merely by a constant]{erdmann2022generalized}. A first step towards adapting $S_{k,n}^w$ for our purposes is to allow $w$ be in $\R^m$ rather than just $\R$.
\begin{theorem}
    For $\X, \Y \subset \R, w: [0,1] \to \R^m$, the Laplace transform of $S_{k,n}^w$ is given by
    \begin{multline}
        \mathbb E_{\Hnull} e^{\langle s, S_{k,n}^w \rangle} = (k-1)(-1)^{n+1}\cdot e^{t k w_{\max}(s)} \times \sum_{j=1}^{n-1} a_j^{e^{t(\langle e_j, Ws\rangle -w_{\max}(s))}} \bigg[ b_{k,n} \left( 1 - e^{t(\langle e_j, Ws\rangle - w_{\max}(s))(n+k-1)}\right)        
        \\
        + \sum_{m=0}^{n-3} c_{k,n,m} \left( 1 - e^{t(\langle e_j, Ws\rangle - w_{\max}(s))} \right)^{n-2-m} \bigg],
        \label{eq:finite_sample_multivariate}
    \end{multline}
    for almost every $s \in \R^m$, where $W\in \R^{n\times m}$ has rows $w(jN^{-1}), j=1,..., n$, $w_{\max}(s) = \| Ws \|_{\max}$, for any $r\in\R^{n}$, $a_j^{r} = \prod_{m\neq j} (r_j-r_m)^{-1}$ and  
    \begin{align*}
        &b_{k,n} = \frac{(-1)^{n}}{n+k-1}\cdot\binom{n+k-2}{n-2}^{-1} &c_{k,n,m} = \frac{(-1)^m}{k+1}\cdot \frac{\binom{k-2}{m}}{\binom{n+m+1}{m}},
    \end{align*}
    as long as $\ker W = \{0\}$.
    \label{thm:multivariate_transform}
\end{theorem}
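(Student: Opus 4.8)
The plan is to reduce the $\R^m$-valued statement of Theorem \ref{thm:multivariate_transform} to the scalar rank-spacing result of \autocite{erdmann2022generalized} by exploiting that $S_{k,n}^w$ is \emph{linear} in the weight function. For fixed $s\in\R^m$, linearity gives $\langle s, S_{k,n}^w\rangle = \tfrac1n\sum_{j=0}^n \langle s, w(j/n)\rangle\,(r_{j+1}-r_j) = S_{k,n}^{v_s}$, where $v_s := \langle s, w(\cdot)\rangle$ is a \emph{scalar} weight whose values on the grid are precisely the coordinates $\langle e_j, Ws\rangle$ of the vector $Ws$. Hence the left-hand side of \eqref{eq:finite_sample_multivariate}, namely $\E_{\Hnull} e^{\langle s, S_{k,n}^w\rangle}$, is nothing but the univariate moment generating function of $S_{k,n}^{v_s}$, so the entire multivariate transform is controlled by the one-dimensional theory and one never has to analyze an $m$-dimensional combinatorial sum directly.

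First I would record the distributional input that makes the scalar transform computable: under $\Hnull$ the ranks are distribution-free, so the spacing vector $(r_{j+1}-r_j)_{j=0}^n$ is distributed as a uniformly random composition of $k$ into $n+1$ parts (the placement of the $k$ points of $\X$ into the gaps cut out by the ordered points of $\Y$), whence $\E_{\Hnull} e^{t S_{k,n}^{v_s}}$ is a normalized sum of exponentials of linear forms in the spacings. Invoking \autocite{erdmann2022generalized}, after reconciling the harmless additive-constant discrepancy between the two definitions noted in the text, this sum evaluates in closed form whenever the grid values of the weight are pairwise distinct; substituting $v_s(j/n)=\langle e_j, Ws\rangle$ and $w_{\max}(s)=\|Ws\|_{\max}$ then yields exactly \eqref{eq:finite_sample_multivariate}. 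Subtracting $w_{\max}(s)$ is legitimate because $\sum_{j=0}^n(r_{j+1}-r_j)=r_{n+1}-r_0=n+1$ is deterministic, so replacing $v_s$ by $v_s-c$ shifts $S_{k,n}^{v_s}$ by a constant and multiplies the transform by a deterministic factor; taking $c=w_{\max}(s)$ both recovers the stated normalization and keeps every exponential in the sum bounded, with the resulting constants collected into the prefactor $e^{tk w_{\max}(s)}$. The coefficients $b_{k,n},c_{k,n,m}$ are the purely combinatorial constants produced by this evaluation, while the Lagrange-type factors $a_j^{r}=\prod_{m\neq j}(r_j-r_m)^{-1}$ are the partial-fraction weights that arise when a sum of geometric terms with distinct ratios $e^{t(\langle e_j, Ws\rangle-w_{\max}(s))}$ is resolved; for a self-contained derivation they appear precisely from expanding $\prod_j(x-e^{\cdots})^{-1}$ into partial fractions.

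The one step genuinely new to the multivariate setting, and the main (if modest) obstacle, is justifying the ``almost every $s$'' qualifier, i.e. showing that for Lebesgue-almost-every $s$ the coordinates of $Ws$ are pairwise distinct, since this is exactly the hypothesis under which the univariate formula applies and under which the $a_j^{\cdots}$ are finite. For a fixed pair $i\neq j$ the coincidence set $\{s:\langle e_i-e_j, Ws\rangle=0\}=\{s:\langle W^{*}(e_i-e_j),s\rangle=0\}$ is a proper hyperplane, hence Lebesgue-null, as soon as rows $i$ and $j$ of $W$ differ, and the finite union over pairs is then null; it is here that the rank condition $\ker W=\{0\}$ enters, to preclude the degenerate alignments of grid rows that would otherwise collapse these hyperplanes and make the partial-fraction coefficients blow up on a positive-measure set. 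The remaining work is bookkeeping: tracking the factor $1/n$ and the grid $j/n$ through the composition sum, matching the boundary conventions $r_0=0,\ r_{n+1}=n+1$, and confirming that the constant offset between the present $S_{k,n}^w$ and the definition in \autocite{erdmann2022generalized} contributes only a deterministic multiplicative factor that is absorbed into the stated prefactor.
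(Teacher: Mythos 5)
Your proposal is correct and follows essentially the same route as the paper: the key identity $\langle s, S_{k,n}^w\rangle = S_{k,n}^{\langle s, w\rangle}$ reduces everything to Theorem 1 of \autocite{erdmann2022generalized}, with the condition on $\ker W$ invoked to secure distinctness of the components of $Ws$ for almost every $s$. Your hyperplane argument for the ``almost every $s$'' qualifier and the partial-fraction remarks merely spell out details the paper leaves implicit.
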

\begin{proof}
    The proof is a direct consequence of Theorem 1 in \autocite{erdmann2022generalized} after observing that $\langle s, S_{k,n}^w \rangle = S_{k,n}^{\langle s, w\rangle}$. The condition on $\ker W$ guarantees distinctness of the components of $Ws$.
\end{proof}
\begin{remark}
    Similar to the expression obtained in \autocite{erdmann2022generalized}, \eqref{eq:finite_sample_multivariate} may appear inconvenient to work with analytically, but can often be inverted numerically.
\end{remark}
\begin{remark}
    The condition on $\ker W$ can be relaxed at the expense of slightly more complicated expressions, see discussion around Theorem 1 of the univariate case.
\end{remark}
The extension to multivariate sample spaces is more involved.
\begin{theorem}
    Tile the unit cube $[0,1]^p$ by a collection $\mathcal S$ of $\lceil N^{1/p}\rceil^p$ sub-cubes, each of side-length $\lceil N^{-1/p} \rceil$ in the natural way, and let $\pi: \{1, ..., | \mathcal S | \} \to \mathcal S$ be an enumeration of $\mathcal S$ with $\pi_{j+1}$ being adjacent to $\pi_j$ (that is, sharing a face) for all $1 \leq j < | \mathcal S |$. Associate with $\pi$ a non-self-intersecting curve $c: [0,1] \to [0,1]^p$ that linearly interpolates between the centers of $\pi_j$ and $c(j/|\mathcal S|) = \mathrm{center}(\pi_j)$. Construct the point measure $\mathcal C^p_N$ as $\mathcal C^p_N = N^{-1} \sum_{j=1}^N \delta_{c(jN^{-1})}$, and optimally transport $\Z$ to $\mathcal C^p_N$ via $H_N$ with induced transport maps $F_k, G_n$ on $\X, \Y$; define $\rho_j = \inf \{ s: | c([0,s]) \cap G_n(\Y) | \geq j \}$. Then a test based on
    \begin{equation*}
        S_{k,n}^{w\circ c} = \frac{1}{n} \sum_{j=1}^n w\circ c \left( \frac{j}{n} \right) (\rho_{j+1} - \rho_j)
    \end{equation*}
    is asymptotically equivalent to one using $T_{k,n}^w$ under $\Hnull$, and the Laplace transform of $S_{k,n}^{w \circ c}$ is given by \eqref{eq:finite_sample_multivariate} (replacing $w$ with $w\circ c$ throughout).
    \label{thm:skn_multivariate_samples}
\end{theorem}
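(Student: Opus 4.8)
The plan is to use the space-filling curve $c$ to collapse the multivariate ranking back to a univariate one, so that both assertions follow from the univariate theory already in hand: the Laplace transform of Theorem \ref{thm:multivariate_transform}, and the asymptotic equivalence of linear rank and linear rank-spacing statistics from \autocite{holst1980asymptotic, erdmann2022generalized}.

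First I would isolate the structural identity that drives everything. Since $\mathcal C^p_N$ places equal mass $N^{-1}$ on each of its $N$ points, the optimal transport $H_{k,n}=H_N$ of $\Z$ onto $\mathcal C^p_N$ is a bijection, so each $H_N(Z_i)$ is a distinct curve point $c(\ell_i/N)$ and $i\mapsto\ell_i$ is a permutation of $\{1,\dots,N\}$ recording the position of $Z_i$ along $c$. Because $c$ is non-self-intersecting, the $j$-th $\Y$-point is reached at curve parameter $\ell_{(j)}/N$, so $\rho_j=\ell_{(j)}/N$ is exactly the normalized univariate rank of the $j$-th $\Y$-point along the curve, and $H_{k,n}(Y_j)=c(\ell_j/N)$ holds identically. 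Summing $w$ over the $\Y$-ranks then yields
\begin{equation*}
    T_{k,n}^{w} = \frac{1}{\sqrt N}\sum_{j=1}^n w\bigl(H_{k,n}(Y_j)\bigr) = \frac{1}{\sqrt N}\sum_{j=1}^n (w\circ c)\left(\frac{\ell_j}{N}\right) = T_{k,n}^{w\circ c},
\end{equation*}
so the multivariate linear rank statistic is literally the univariate one driven by the weight $\tilde w=w\circ c$, and $S_{k,n}^{w\circ c}$ is literally a univariate linear rank-spacing statistic with that same weight.

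Given this identity, the Laplace-transform claim is immediate: Theorem \ref{thm:multivariate_transform} was stated for an arbitrary $\R^m$-valued univariate weight, so it applies verbatim to $\tilde w=w\circ c$, with the matrix $W$ now having rows $w(c(jN^{-1}))$ and the hypothesis $\ker W=\{0\}$ reading that $\{w(c(jN^{-1}))\}$ spans $\R^m$ (generic, and relaxable as in the remark following Theorem \ref{thm:multivariate_transform}). For the asymptotic-equivalence claim I would first check that $\mathcal C^p_N$ meets the hypotheses of Theorem \ref{thm:multivariate_sample_space}: the sub-cube centers become dense, so their empirical measure converges weakly to $\mathrm{Uniform}([0,1]^p)$, while $N^{-1}\max_{x\in\mathcal C^p_N}\|x\|_2^2\le p/N\to 0$ since the curve stays in the bounded cube. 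It then remains to invoke the univariate equivalence $S_{k,n}^{\tilde w}\approx T_{k,n}^{\tilde w}$ and read off $S_{k,n}^{w\circ c}\approx T_{k,n}^{w\circ c}=T_{k,n}^w$ under $\Hnull$.

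The step I expect to be the main obstacle is precisely this last invocation, because $\tilde w=w\circ c$ depends on $N$ through the curve, so one cannot cite \autocite{holst1980asymptotic} as a black box but must verify its regularity hypotheses uniformly along the triangular array $(w\circ c_N)_N$. The linchpin is a vanishing-increment estimate: consecutive evaluation points are adjacent sub-cube centers at spatial distance $\approx N^{-1/p}$, whence $\|\tilde w((j+1)/N)-\tilde w(jN^{-1})\|_2\le L\,N^{-1/p}\to 0$ with $L$ the Lipschitz constant of $w$ on the compact $[0,1]^p$, while $\int_0^1\|\tilde w\|_2^2 = N^{-1}\sum_j\|w(c(jN^{-1}))\|_2^2\to\int_{[0,1]^p}\|w\|_2^2$ by the same weak convergence. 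These are exactly the conditions the univariate equivalence requires, and the rate $N^{-1/p}\to 0$ makes them hold for every $p$, closing the argument.
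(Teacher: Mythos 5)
Your proposal follows essentially the same route as the paper: reduce to the univariate case by observing that, under $\Hnull$, the curve positions $(\rho_1,\dots,\rho_n)$ of $G_n(\Y)$ are uniform over size-$n$ subsets of $\{1,\dots,N\}$, then invoke Theorem \ref{thm:multivariate_transform} for the Laplace transform and the univariate $S$--$T$ equivalence for the asymptotic claim. Your explicit identity $T_{k,n}^{w}=T_{k,n}^{w\circ c}$ and your verification of the regularity conditions along the triangular array $(w\circ c_N)_N$ (via the $N^{-1/p}$ increment bound) supply details the paper's proof leaves implicit, but the underlying argument is the same.
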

\begin{proof}
    The construction essentially reduces the multivariate case to the univariate one: as $H_{k,n}(\Y)$ is uniform over $\mathcal C^p_N$ under $\Hnull$, $(\rho_1, ..., \rho_n)$ are uniform over subsets of size $n$ from $\{1, ..., N\}$. The theorem then follows from convergence results of the univariate $S_{k,n}^w$, and Theorem \ref{thm:multivariate_transform}.
\end{proof}
\begin{remark}
    A $\pi$ that is convenient to work with in practice can be constructed by induction: for $p=2$, label each sub-cube in $\mathcal S$ by $(x_1,x_2) \in \{1, ..., |\mathcal S|^{1/2} = a_2 \}^2$ and set 
    \begin{equation*}
        \pi_{2,j} = \left(\lfloor j/a_2 \rfloor, (-1)^{\lfloor j/a_2 \rfloor}\left[ 1 + \left( j-1 \mod a_2 \right) \right] \right).
    \end{equation*}
    For $p>2$, similarly label each sub-cube by a $p$-tuple $(x_1, ..., x_p) \in \{ 1, ..., |\mathcal S|^{1/p} = a_p \}^p$, and define $\pi_{p,j} = ( \lfloor j/a_p^{p-1} \rfloor, \pi_{p-1, j - \lfloor j/a_p^{p-1} \rfloor} )$.
\end{remark}
\begin{remark}
    Convergence under local contiguous alternatives in the univariate setting has been worked out in \autocite{holst1980asymptotic}. The proofs there are established under a univariate $w$; however, it appears that this assumption can be relaxed to $n^{-1}\sum_{j=1}^n w(jn^{-1}) \to \int w$. Therefore, as long as $c$ in Theorem \ref{thm:skn_multivariate_samples} is chosen such that $n^{-1}\sum_{j=1}^n w\circ c(jn^{-1})\to \int w$ (as is, e.g., the case with the choice given in the remark above), such convergence results ought to transfer to the multivariate setting, in which case $S_{k,n}^w$ and $T_{k,n}^w$ are asymptotically equivalent even under local alternatives. Rigorously checking these conditions is left for future work.
\end{remark}

\section*{Conclusion}

While the local optimality of linear rank statistics for certain choices of coefficients has been known in several specific situations (e.g., location shifts), a general framework embedding such instances into maximum-likelihood-type arguments appears to not have been articulated thus far. By doing so explicitly, we hope to provide guidance to practitioners who may suspect certain generative models underlying a given application, but would like to guard against misspecification. Such situations arise frequently in modern science, where, e.g., the broad physical or biological (stochastic) mechanisms at play are understood, yet experimental measurements add less well characterized noise. With the complexity of hypotheses and experimental protocols increasing, so must the complexity of statistical models; in particular, building tests that can incorporate multivariate sample and parameter spaces is desirable. Building on recent progress tying together multivariate ranks and optimal transport, we identify alternative transport maps that extend linear rank statistics from the univariate to the fully multivariate setting while preserving all their favorable properties at improved computational cost. For problems whose complexity is well captured by the framework surrounding Pitman efficiency, the resulting adaptive linear multi-rank statistics always outperform likelihood-ratio tests, and thus should be preferred. Viewing linear rank statistics from both a non-parametric testing angle and a maximum-likelihood perspective opens up applications that might not be straightforward through either approach alone; here we showcased this flexibility on the examples of nuisance parameters, multiple testing and general $K$-sample testing. Most of the findings presented pertain to the large-sample regime, though finite-$N$ guarantees become available for the closely related linear rank-spacings statistics. A software package implementing the full $K$-sample test is made available with this paper.

\section*{Acknowledgments}

The author warmly thanks G{\"u}nther Walther for comments that inspired the section on projecting out nuisance parameters.

\printbibliography

\end{document}